\newcommand{\Description}{} 
\def\TheoremHeaderFont{\bfseries\scshape}
\let\oldth@TH=\th@TH
\gdef\th@TH{%
  \oldth@TH
  \theorempreskipamount=12pt plus 3pt minus 2pt \theorempostskipamount=12pt plus 3pt minus 2pt
  \labelsep0.5em
  \def\@begintheorem##1##2{\normalfont\TheoremTextFont
        \item[\hskip\labelsep\normalfont\theorem@headerfont ##1\ ##2.]}%
  \def\@opargbegintheorem##1##2##3{\normalfont\TheoremTextFont
        \item[\hskip\labelsep\theorem@headerfont ##1\ ##2\ {\bf(##3)}.]}
}
\let\oldth@EX=\th@EX
\gdef\th@EX{%
  \oldth@EX
  \theorempreskipamount=12pt plus 3pt minus 2pt \theorempostskipamount=12pt plus 3pt minus 2pt
  \labelsep0.5em
  \def\@begintheorem##1##2{\normalfont\TheoremTextFont
        \item[\hskip\labelsep\normalfont\theorem@headerfont ##1\ ##2.]}%
  \def\@opargbegintheorem##1##2##3{\normalfont\TheoremTextFont
        \item[\hskip\labelsep\theorem@headerfont ##1\ ##2\ {\bf(##3)}.]}
}
\newcommand{\defsubexample}[1]{%
\newcounter{subexamplecounter\getrefnumber{#1}}%
}
\newcommand{\subexamplelabel}[2]{\customlabel{#2}{\ref*{#1}.\arabic{subexamplecounter\getrefnumber{#1}}}}
\newenvironment{subexample}[1]{%
\refstepcounter{subexamplecounter\getrefnumber{#1}}%
\medskip\noindent\normalfont\TheoremHeaderFont{Example~\ref*{#1}.\arabic{subexamplecounter\getrefnumber{#1}}}:}
{\medskip}
\long\def\@makefigurecaption#1#2{{
    \EGT\FigureCaptionFontStyle \HD{9}{0}{\FigureNameFontStyle #1
      \kern16pt}\ignorespaces #2\HD{0}{0}}\endgraf}
\let\oldendproof\endproof
\renewcommand{\endproof}{\hfill\halmos\oldendproof}
\begin{document}


\RUNAUTHOR{Day and Lubin}

\RUNTITLE{Pricing Valid Cuts}

\TITLE{Pricing Valid Cuts for Price-Match Equilibria}

\ARTICLEAUTHORS{%
\AUTHOR{Robert Day}
\AFF{Department of Operations and Information Management, University of Connecticut,  Storrs, Connecticut 06269\\ 
	\EMAIL{robert.day@uconn.edu} \URL{}}

\AUTHOR{Benjamin Lubin}
\AFF{Department of Information Systems, Boston University,  Boston, MA 02215\\ 
	\EMAIL{blubin@bu.edu} \URL{}}

} 

\ABSTRACT{%
 We use valid inequalities (cuts) of the binary integer program for winner determination in a combinatorial auction (CA) as ``artificial items'' that can be interpreted intuitively and priced to generate Artificial Walrasian Equilibria.
We thus provide a method for converting a CA problem that admits only non-anonymous, nonlinear bundle prices into one that admits anonymous linear prices over the augmented item space, forestalling ex-post bidder complaints about opaque and strongly discriminatory pricing.
To this end, we introduce a refinement of the Walrasian equilibrium which we call a “price-match equilibrium” (PME) in which all prices are justified by providing an iso-revenue reallocation for the hypothetical removal of any single bidder.
We prove the existence of PME for any CA and characterize their economic properties and computation. We implement \emph{minimally artificial} PME rules and compare them with other prominent CA payment rules in the literature.   
}%


\KEYWORDS{combinatorial auction, Walrasian Equilibrium, prices, cut generation}

\maketitle

%


\section{Introduction}
Walrasian Equilibrium (WE) prices support an efficient market outcome in which each bidder weakly prefers the bundle she is awarded by the market at the provided prices, and in which an item receives a positive price only if there is no excess supply of the item. A Combinatorial Auction (CA) is a market in which bidders express detailed preference information over combinations of indivisible items. In spite of the importance of WE as a foundation of microeconomic theory, it is well known that WE item prices may not exist in a general CA. 

\citet*{BM87} first showed in the CA context that a WE exists if and only if the linear programming relaxation of its \emph{winner determination problem} (WDP-LP) has an integer optimal solution. Here, we emphasize that their result holds in the context of linear anonymous prices on the \emph{original items} of interest, i.e., the items listed by the auctioneer for sale. Following the spirit of this approach, we too apply strong duality in the WDP-LP context to obtain WE prices, but we also generate valid inequalities or cuts to the WDP and then price those cuts to establish WE prices in any case where they would not otherwise exist. We call these \emph{Artificial Walrasian Equilibria} or AWEs. Philosophically, these cuts offer prices for \emph{artificial items} that the auctioneer ``didn't think to offer'' for sale, such as a license to own a package of a particular size, or for example, a license to take any pair of items from the set $\{A,B,C,D,E\}$. If the auctioneer had the foresight to set up and sell such licenses, then a normal WE would be available. Here, we let the auction generate these artificial items endogenously, sparing the auctioneer from the need to predetermine which combinatorial structure of artificial items would do the job.

Recall that the core of a game is the set of payoffs that cannot be \emph{blocked} by a coalition that could do better by trading among themselves. We show that there are many ways to meaningfully generate and price cuts as artificial items for a given CA, proving that any point in the core of the auction game can be supported as an AWE, and that a single core outcome may have several representations as an AWE. Given this large space of AWEs to explore, we introduce an important subset of WE prices which we call \emph{price-match equilibrium} (PME) prices, defined here as item prices that would still be supported if any one bidder were removed (without charging any other bidder more than her bid). Such prices are more \emph{justified} from each bidder's perspective than non-PME WE prices: each player observes that should they leave, the seller could still collect any price they refuse to pay.
%

We show that PME prices always exist for a CA when artificial items are permitted. While not previously described in the literature, we emphasize the PME concept as a meaningful normative criterion for market design: any non-PME price outcome should be viewed as an ``unreasonable'' over-pricing by the seller, an artifact of applying the WE concept to general combinatorial markets na\"ively, and a practice that should be avoided. Each bidder should demand that the auction not charge her more than can be \emph{matched} by other bidders. In procurement auctions, such as those in public contracting, the direction reverses: working contractors should demand not to be underpaid simply because the auctioneer selected a non-PME linearization of the combinatorial market. Every bidder should demand that the amounts they pay (or are paid in procurement) are fully matched by a feasible set of competing bids. Through artificial-item pricing as presented here, this refined notion of price-matching (a stricter notion of fairness than WE) is available.

We show that minimum-revenue core-selecting (MRC) payments (as introduced in \citealt*{DR07}) can always be supported with PME prices, but other PME possibilities are also available. As a new alternative to MRC prices, we refine further to suggest the use of \emph{Minimally Artificial} PME payments, or MAP payments for short. These are selected such that the aggregate adjustment associated with the use of cuts is minimized, subject to PME being established, a way to stay ``as close to standard linear prices as possible.'' Generic MAP payments are then \emph{elaborated upon} to find a minimal set of specific supporting cuts (and thus specific linear prices) that uphold the MAP property.

Final MAP prices will sometimes correspond to existing payment schemes (like MRC, or VCG), but given other bids they may be distinct from any fixed pricing approach from the literature that we are aware of. The total-revenue characterization for a fixed set of bids is given by:
\begin{align}
\text{pay-as-bid}\geq\text{NWE}\geq\text{MAP}\geq\text{MRC}\geq\text{VCG} \label{REV}
\end{align}
where NWE represents any \emph{Natural} WE (i.e., without any cuts/priced artificial items) should they exist, and VCG is the classical \emph{Vickrey-Clarke-Groves} mechanism, known to be the unique dominant-strategy incentive-compatible mechanism under typical assumptions. We show through a series of examples that each inequality above can be strict in some CA instances and exact equality in other instances. When no NWE exists, we may still have non-PME AWEs with strictly greater revenue than MAP prices, or at the other extreme, for example when bidders have unit-demand, minimal NWE = VCG and the distinction among these four solution concepts disappears. We further show that unlike MRC payments, MAP prices do not suffer from the \emph{unrelated-goods problem} introduced by \citet{B13}, and that the commonly studied gross substitutes condition, while sufficient to imply NWE existence, does not imply that the NWE is a PME.

Overall, MAP prices emerge as an alternative pricing paradigm for both prescribing total payments and \emph{explaining} them, precisely where traditional WE fails and where MRC and VCG had previously offered only opaque lump-sum payments. In some cases, the latter two revenue inequalities may be equality, and so MRC and VCG may admit supporting MAP prices, which therefore offer an explainable breakdown into individual linear prices. In other cases, MAP prices will differ from VCG and MRC. 

\begin{figure}
    \centering
    \includegraphics[width=0.75\linewidth]{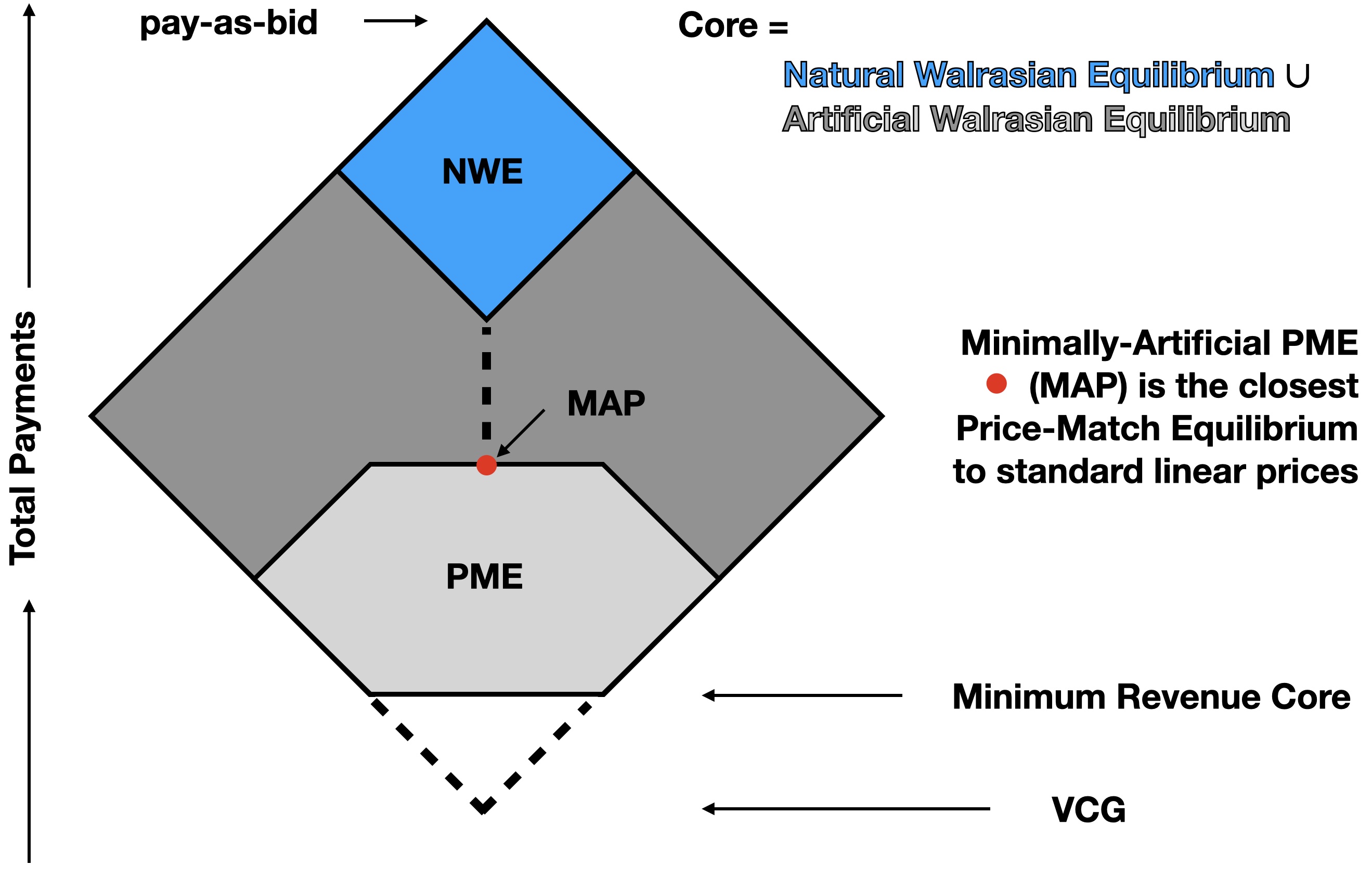}
    \caption{A generic diagram of the core in payment space with five alternative CA pricing rules, including our newly introduced MAP prices and the PME region of prices. The central dark gray region consists of non-PME points that can be supported by AWE prices.}
    \Description{}
    \label{fig:MAPcol}
\end{figure}
Figure \ref{fig:MAPcol} summarizes and contextualizes our contributions with a generic depiction of the core in payment space. The core itself is bounded above by bids (individual rationality constraints) and below by $\geq$-constraints that ensure subsets of winning bidders must weakly exceed the offers of potentially blocking coalitions. NWE are defined by envy-freeness constraints: the prices of the items I bid on must be large enough that I don't want to switch to a different package. The blue NWE region is known to shrink to non-existence, or expand to fill the entire core, including perhaps the VCG point, depending on the structure and strength of the bids. The region of PMEs, by contrast, are defined by $\leq$-constraints: payments must not be so much that they cannot be matched by losing bids. The PME region may contain outcomes with greater total payments than MRC (as depicted) or may shrink to  coincide with the MRC exactly, but PME always includes the MRC points, which themselves must always exist.

Pay-as-bid and the VCG payments represent the two extreme benchmark pricing rules for CAs in Figure \ref{fig:MAPcol}, outcomes that have been broadly studied and need little introduction. The latter are known to sometimes be outside the core, as depicted, a motivation for the use of MRC points which are by definition the smallest total-payment outcomes in the core. Together, VCG, MRC, NWE (when they exist), and pay-as-bid, give four potential payment outcomes for a given set of bids, all of which have been studied extensively and used in practice. Here, we introduce and explore the use of MAP prices as a fifth well-defined alternative outcome for any CA, an intermediate point which, as shown, often lies in the interior of the core.

To see one potential application of the results presented here, suppose MAP was used in the final (supplementary) round of the widely used combinatorial clock auction \citep{ACM2006}. If the market were to clear (with no excess supply) in the clock phase of such a CA, and if the associated NWE satisfies the newly defined PME property, a MAP auction would terminate at NWE with no change in payments between the clock phase and final outcome, given straightforward bidding. This would avoid the potentially hard-to-explain drop in revenue a seller could experience under a VCG or MRC final pricing rule, when agreeable NWE prices were already obtained in the clock phase. Example \ref{ex:Four} below shows an example of this type; clock prices would find NWE = MAP payments of 20 to clear the market, but a supplementary round with VCG or MRC would lower revenue to 10. Importantly, in other situations (like Example \ref{ex:Seven} below) all NWE may fail the PME condition, while a MAP price rule for the supplementary round would produce the more explainable PME outcome. Given the above revenue characterization, MAP prices offer an intermediate payment scheme for the market-design tool kit, between the traditional schemes in both revenue generation and truthfulness, which tend to be directly opposed as described by \citet*{ParkesKalagnanamEso2001}.


\section{Background and Related Literature}
Real-world CA applications often use a combinatorial-clock format, with a first stage consisting of iterative ``clock'' rounds, price-vector announcements eliciting demand responses, followed by a sealed-bid CA final stage, as in \citet*{ACM2006}. Here, we focus on a the latter sealed-bid CA as a standalone process, which can be used as presented or as the final sealed-bid phase of a more complex mechanism.  For such a simple mechanism, bidders submit collections of bids and the auctioneer announces final winners and payments on behalf of the seller. Alternatively, this paper studies a \emph{solution concept} for the auction game, interesting in its own right as an explainable, agreeable outcome when all values are known.

We assume that we are working on problems of a practical size, ones for which we can comfortably solve the NP-hard WDP with available software, such as CPLEX or Gurobi. Under similar assumptions, in 2013 the UK used CAs to sell 28 licenses grouped into 4 types, then in 2014 Canada sold 98 licenses in 56 types, and then in 2015 318 licenses of 106 types, the latter aided by the use of a specialized $OR$-based bidding language \citep{AusBar17}. Thus, while a generic CA winner-determination problem is NP-hard and has poor asymptotic performance as problem size grows, many practical problem instances are far enough from such asymptotic behavior for the limitation to be of little concern. 

Walrasian Equilibrium (aka competitive equilibrium) is a central solution concept in microeconomics, with literature far too broad to be surveyed here. (See \citealt{BichFich21}, for a recent survey from the optimization perspective.) Among related work, \citet{baldwin2024strong} recently provided a strong-substitutes bidding language, a specific submission format (and thus preference restriction) for CAs in which Walrasian equilibrium prices (on the natural items) are guaranteed to exist. \citet{BichFux2018} apply WE principles to a real-world application in the regulation of fisheries, but in their case, when linear anonymous prices do not exist, they relax efficiency in favor of keeping linearity and anonymity of the prices (for natural items) strictly. Here, we instead explore artificial items to approximate WE, relaxing along a different dimension while strictly maintaining efficiency with respect to submitted bids. 

Approximation to competitive equilibrium is also used in allocation markets with fictional money by \citet{Bud2011}, a format that has proved useful in practice for seats in academic courses and other situations of heterogeneous preferences and high-demand. More recent work along this line is given by \citet{NV22} who investigate social-approximate equilibria when no WE exists, providing both approximation techniques via relaxation and preference restrictions that result in provable bounds on the relaxation error. Taken as a whole, this literature stream explores a number of ways to approach practical market design by approximating WE along various dimensions when they are not guaranteed to exist, as in the current work. 

When each bidder is interested in consuming at most one item (a.k.a., unit demand) the minimal NWE and VCG outcomes are guaranteed to be equivalent \citep{leonard1983elicitation}, and therefore the revenue comparison (\ref{REV}) above flattens, yielding strict equality among the minimal NWE, MAP, MRC, and VCG solution concepts. Under such a strict assumption on preferences, several known auction mechanisms (e.g., primal-dual auctions as in \citet{dSV07}, etc.) could be run to deliver the same outcome we prescribe here: VCG with a decomposition into linear-anonymous WE item prices. Here, we make no such preference assumptions and instead provide distinct MAP prices for cases in which unit demand and even less strict conditions like gross substitutes and coalitional submodularity \citep{dSV07} fail.

\citet{bikhchandani2002package} show that VCG is a WE if and only if a ``buyers are substitutes'' condition holds. They explored combinatorial exchanges (i.e., with potentially several sellers in addition to several buyers), and consider equilibria supported by potentially non-linear and non-anonymous prices. Under general preferences for CAs (i.e., with one seller under standard assumptions), they show the existence of instances for which only non-linear and non-anonymous pricing equilibria exist. Here, we show that if one is willing to accept an extended notion of linear prices on carefully selected \emph{artificial} items, then linear anonymous prices are possible whenever the core is non-empty, including all such CAs. 


The introduction of artificial items results in two distinct notions of anonymous prices: (1) as in \citet{bikhchandani2002package}, prices are anonymous if any two bidders see the same total price for the same bundle of \emph{natural} items; (2) herein, prices are anonymous if any two bidders see the same total price for the same bundle of natural \emph{and artificial} items. Note that (2) does not imply (1); in Example~\ref{ex:Seven} (below) bidder 1 sees a price for bundle $\{B,C\}$ of 12, while bidder 2 sees a price of 8, violating (1). Yet, (2) holds because only bidder 1 would be required to buy artificial items priced at 4 along with $\{B,C\}$, based on the combinatorial structure, as will be explained below. Accordingly, interpreting artificial items as licenses the seller \emph{could have offered}, all bidders face the same traditionally anonymous price vector (i.e. according to definition (1) above) in this augmented setting, which is precisely what we mean by definition (2).
%
%
Mirroring the \citet{bikhchandani2002package} characterization of their non-linear non-anonymous equilibria as ``filling the core'' when there is only one seller, we show that the linear anonymous prices (under definition (2) above) that we introduce also fill the core in Theorem \ref{cuttocore}.

Also related to the current work are others that implicitly introduce cuts and price them. Among these, \citet{milgrom2022linear} find market outcomes that approximate natural WE when none exist. They do not make the distinction of desiring PMEs as introduced here, nor do they explore the concept of cuts as artificial items directly. Like \citet{oNeill2005efficient}, they have a single markup offset that can be interpreted in the language of the current paper as a single cut to the WDP-LP based on prices, similar to a tax on auction spending. The combinatorial cuts of the current paper attempt instead to generate artificial structures that are as \emph{simple} or \emph{explainable} as possible, in that we look for cuts reflecting simple relationships among bids that can be represented as integer knapsack constraints, with smaller constant terms favored as simpler and more explainable.

Even more closely related is the recent work of \citet{LahaieLubin2025} who introduce the adaptive (endogenous) generation of price terms in a dynamic (t\^atonnement) CA. They focus on CAs with unique items (supply of one item for each commodity) and ``monomial'' price terms that each correspond to a subset of items, allowing for the pricing of certain synergies directly, with the goal of terminating at WE prices. From the perspective of the current paper, each monomial price term can be viewed as a clique cut (i.e., a fictional item for which there is exactly one copy), though it can also be shown that not all clique cuts are implicitly considered. (For example, Example~\ref{ex:One}  below uses a clique cut that does not correspond to a unique subset or monomial term in that paper.) They are able to show computationally that these cuts are highly effective in generating WEs by augmenting the space of prices under consideration, complementing the perspectives offered here. Here, we also show a framework that adds pricing terms, but move to more general multi-unit CAs and explicitly search over a broader set of possible cuts. This richer set of cuts allows us to guarantee the stricter PME refinement of WE. 

Similarly, the concurrent work of \citet{ED24} uses the full class of clique cuts (but not cuts with an arbitrary constant term) to extend the price space in an attempt to better approximate WE prices when a NWE does not exist. The focus there is on larger problems and compact representation, rather than the details of finding AWE with desirable properties. Notably, the PME property is not considered in any of these related papers, which to the best of our knowledge is described for the first time here.

All AWE (and hence PME) prices naturally produce \emph{core-selecting} auctions, in which no coalition of bidders could propose a mutually (weakly) preferred outcome with higher total revenue. Core-selecting auctions are also described in \citet{DM08}, with the precise selection of core points discussed in \citet{DC12} and \citet{Bunz22}. We use the ideas of quadratic minimization as a tie-breaking rule as in the latter papers, but emphasize that here, quadratic tie-breaking is over item prices instead of personalized payments, an arguably better approach.  Here we also show that all core-selecting points can be reached as an AWE via just one valid cut, which allows us to easily select the core point we want: PME prices with a minimal amount of artificial price adjustment, then decomposed into a collection of more interpretable cuts.

\section{Model Preliminaries}

In a \emph{combinatorial auction} (CA), a single seller offers $J$ different types of indivisible items for sale to a set of $I$ bidders, denoted $\mathcal{I}=\{1,2,...i, ...I\}$.
The set of item types is denoted $\mathcal{J}=\{1,2,...j, ...J\}$, and each item $j$ has a supply of $c_j \geq 1$ identical copies available to be purchased. CAs with unique items, i.e., those with $c_j=1$ for all $j$, are often studied in the literature; here we allow for the more general case. Column vector $c$ will denote the complete supply vector with elements $c_j$.

Each bidder $i$ is interested in receiving a \emph{bundle} of items, represented as an integer column vector $a \in \mathbb{Z}^J_{\geq 0}$, with $0\leq a_j \leq c_j$ for each component $a_j$. 
Preferences are represented by a \emph{valuation function} $v_i:\mathbb{Z}_{\geq 0}^J\rightarrow \mathbb{R}_{\geq 0 }$, maintaining standard assumptions of normalization and monotonicity (aka free disposal), i.e., $v_i(\emptyset)=0$, and $a'\geq a \implies v_i(a') \geq v_i(a)$, for all $i \in I$ and any bundles $a'$ and $a$.

In a sealed-bid CA, the auctioneer collects a set of bids $\mathcal{K}=\{1,2,...k, ...K\}$ from the bidders, each one containing a bundle of interest $a^k$ and a monetary bid amount $b_k$. The subset of bids made by a particular bidder $i$ is $\mathcal{K}_i$, and when $|\mathcal{K}_i|=1$ we say bidder $i$ is \emph{single-minded}. Where convenient, we denote as $i(k)$ the unique bidder making bid $k$. While many ``bid languages'' have been considered in the literature for particular markets, general studies of CAs typically focus on this $XOR$ language, allowing for full expressivity. Stated simply, this means that the auctioneer may accept at most one bid from $\mathcal{K}_i$ for any bidder $i$. 

The bid amounts $b_k$ are collected into a row vector $b$, and the indexed set of all bundles (column vectors) associated with bids $a^k$ can be collected into a $J\times K$ matrix $\bm{A}$, i.e., each element $\bm{A}_{j,k}=a^k_j$ of $\bm{A}$ equals the desired quantity of item $j$ in bid $k$. As in other literature on Walrasian equilibria in CAs, we consider the existence and computation of prices as a solution concept while assuming bidders tell the truth. Thus, except where otherwise noted we assume truthful bidding, i.e., $b_i(a)=v_i(a)$ for all $i$ and $a$.

For small illustrative examples, we write $b_i(a)$ for bidder $i$'s bid amount on bundle $a$, or alternatively, for example, $b_3(ABD)=20$ to denote bidder $3$'s bid is \$20 for the set $\{A,B,D\}$, with an alternative naming of items by capital letters. We drop set braces in many places where the meaning should be clear, e.g., writing $\mathcal{I}\setminus i$ instead of $\mathcal{I}\setminus \{i\}$, and will say, for example, $b_1(9A)=85$ for a bid of 85 from bidder $1$ for $9$ copies of item $A$. 

All proofs not directly stated in the text are deferred to Appendix~\ref{appProofs}.

\subsection{Winner Determination}
To capture the $XOR$ nature of these bids, let $\bm{B}$ be the $I\times K$ matrix with elements $\bm{B}_{i,k}=1$ when $k \in \mathcal{K}_i$ and $\bm{B}_{i,k}=0$ otherwise. The standard winner determination problem (WDP) for the CA can be written as a binary linear program:
\begin{align}
    w(\mathcal{I}, \mathcal{J}, \mathcal{K})=\quad \max\quad & bx \label{WDP} \tag{WDP}\\
    \text{s.t.\quad} &\bm{A}x \leq c \label{suplfeas}\\
    &\bm{B}x \leq \bm{1} \label{xorfeas}\\
    &x \in \{0,1\}^K \label{integ}
\end{align}

\noindent where $\bm{1}$ is the vector of all ones of dimension $I$. This optimization selects a bid (column) $k$ by setting the corresponding decision variable $x_k$ to 1, with constraints (\ref{suplfeas}) ensuring that no more than $c_j$ copies of any item is sold, and (\ref{xorfeas}) upholding the $XOR$ language, i.e., at most one bid may be accepted for each bidder. A feasible solution to WDP is called a \emph{feasible allocation}. When bids are true values, i.e., $b=v$, an optimal solution is called \emph{efficient}. An optimal solution is said to be \emph{efficient with respect to bids $b$} where necessary to avoid confusion.

For a selected optimal solution $x^*$, bidder $i$'s \emph{allocated bundle} will be denoted $a^{i*}$, equal to the unique column of $\bm{A}$ with $x^*_k = 1$ and $k\in \mathcal{K}_i$ or the zero vector/empty bundle $\bm{0}$. (Note that some literature uses an $x$ or $X$ variable to denote an \emph{allocation} directly; here we use $x$ as a vector of binary \emph{selection} variables to choose each bidder's allocation from among fixed coefficient columns $a^k$, emphasizing both an integer programming (IP) convention and that in practice, with $XOR$, bidders submit bundles which are then treated as parameters.) The set of bidders receiving a nonempty bundle at $x^*$ will be denoted $\mathcal{I}^*$, with winning bids denoted $\mathcal{K}^*$.

The linear relaxation of WDP replaces (\ref{integ}) with $\bm{0} \leq x \leq \bm{1}$, and will be called WDP-LP. Because a solution to WDP-LP is potentially fractional, we write $w^f(\mathcal{I}, \mathcal{J}, \mathcal{K})$ for its objective function. Similarly, $x^f$ is a solution to WDP-LP, used when it is different from $x^*$. A \emph{valid inequality} or \emph{cut} in the current context is a new row added to WDP that is satisfied by all binary feasible solutions to WDP, but not necessarily all fractional feasible solutions to the relaxation WDP-LP. Because the final cuts we prescribe will have non-negative integer coefficients, the addition of any such cut results in a new WDP formulation equivalent to an extended CA with additional items but the same set of solutions, hence our interpretation of cuts as \emph{artificial} items.
 
\subsection{Payments, Prices, and Equilibrium}
In a sealed-bid CA, the auctioneer assigns an allocated bundle $a^{i*}$ and a monetary \emph{payment} amount $\rho_i$ to each bidder $i$. We assume \emph{quasilinear utility}: a bidder's payoff or utility is given by $u_i(a^{i*},\rho_i)=v_i(a^{i*})-\rho_i$. Such an auction mechanism is called \emph{individually rational (IR)} whenever $\rho_i \leq v_i(a^{i*})$ for all $i$, including $\rho_i=0$ when $a^{i*}=\bm{0}$. Similarly, the auction is \emph{IR with respect to bids} when $\rho_i \leq b_i(a^{i*})$ for all $i$. 

Of particular importance are \emph{linear anonymous} prices, where payments can be computed as $\rho_i=p a^{i*}$ for some row vector $p$ with linear anonymous prices $p_j \in \mathbb{R}_{\geq 0}$ for each item $j$. When a linear anonymous price vector $p$ is announced, each bidder can compute her \emph{demand correspondence}, defined as: 
\[
D_i(p)= \text{argmax}_{a \leq c} v_i(a)- pa
\]

\noindent This is the bidder's set of preferred bundles at the announced prices, with emphasis that this set could contain multiple bundles, including perhaps the empty bundle.

A solution to (WDP) and price vector $p$ are said to be \emph{envy-free} whenever each bidder prefers her allocated bundle at the current prices, i.e., when $a^{i*} \in D_i(p)$. An efficient, envy-free allocation-price pair are defined to be a \emph{Walrasian equilibrium} (WE) if also every item $j$ with excess supply has $p_j=0$. This latter condition can be represented as a complementary slackness condition in vector form: $p(c-\bm{A}x^*)=0$.

\begin{definition}
    An outcome is an \emph{artificial Walrasian equilibrium} (AWE) for a CA if and only if it is a WE for an augmented CA in which valid cuts of the original CA are treated like natural items with standard linear prices. 
\end{definition}

\subsection{Computing Dual Prices}

In this paper, we explore finding AWEs and will motivate and demonstrate how to select from among them over the next several pages. Anticipating the final outcome of this process (before we have fully specified it), assume that $\bm{A}$ contains a final set of added cuts that result in an AWE. It follows that WDP-LP will solve to integer optimality, and that dual prices form a WE via the main result of \citet*{BM87}. The dual of WDP-LP is:
\begin{align}
    \min\quad & pc + s\bm{1} \label{WDP-dual} \tag{WDP-dual}\\
    \text{s.t.\quad} & p\bm{A} + s\bm{B} \geq b \label{dual-env}\\
    & p \in \mathbb{R}^J_{\geq 0}\text{ ,}\quad s \in \mathbb{R}^I_{\geq 0} \label{dualnneg}
\end{align}

\noindent Each constraint (\ref{dual-env}) is of the form
$\sum_{j \in \mathcal{J}}a^k_j p_j + s_{i(k)} \geq b_k$, thus enforcing envy-freeness for non-winning bids. When WDP-LP is integral at optimality, $s_{i(k)}=b_k - \sum_{j \in \mathcal{J}}a^k_j p_j$ for each winning bid. Thus, dual variables $s_i$ are the \emph{surplus} amount for each bidder $i$, the \emph{apparent} utility, treating the bid as true value and subtracting all relevant prices in the bundle, with $s_i=u_i$ when $b=v$.

Solutions to WDP-dual are usually not unique, and often in the worst case for bidders, high prices giving $s_i=0$ for all $i$ are feasible. Thus, we will optimize over the set of solutions to instead ensure \emph{bidder-optimal payments}, as discussed in \citet*{AusMil02}, with the motivation of providing better incentives to bidders. While there may be a number of ways to select bidder-optimal payments, we provide one definitive selection process as:
\begin{align}
    \min \quad & p\bm{C}p^T \label{WDP-quad-dual} \tag{WDP-quad-dual}\\
    \text{s.t.\quad}  & (\ref{dual-env})-(\ref{dualnneg}) \notag \\
    &pc + s\bm{1} = b x^*\label{lockdual}
\end{align}

\noindent where (\ref{lockdual}) locks in the known optimal objective value to WDP-dual from a given solution $x^*$ to WDP, and where $\bm{C}$ is a $J\times J$ matrix with diagonal $\bm{C}_{j,j}=c_j$ and all other entries zero. This implements a quadratic minimization of squared prices (for $c_j$ copies of each item) over all solutions to WDP-dual.

\section{Pricing Valid Cuts as Artificial Items}
There are many methods for adding cuts to an IP whose LP relaxation is non-integral. Indeed there is a vast literature on the subject, generally dealing with the development of practical algorithms for solving IPs. We utilize a few of the tools from that literature stream (e.g., Chv\'{a}tal-Gomory cuts and lifting), but here, despite the NP-hardness of solving WDP in general (see, for example, \citealt{WD06}), we assume that we are working on problems of a practical size, ones for which we can comfortably solve the NP-hard WDP with available software, such as CPLEX or Gurobi.

\subsection{Motivation and examples}
Before presenting the details and properties of PME pricing, we show a few small examples to build intuition. We begin with a demonstration of the non-existence of WE prices for CAs.

\begin{example}
\label{ex:One}
$b_1(AB)= b_2(AC)= b_3(BC)=10;\quad \bm{b_4(ABC)=12}$
\end{example}

\noindent Here, we have $I=K=4$ single-minded bidders/bids, each bidding on exactly one bundle from a set of $J=3$ unique items, $\{A,B,C\}$. Bid $4$ is the only winning bid in the unique optimal solution to WDP. (We will indicate winning bids in these examples with $\bm{bold}$ font.) No IR prices constitute a WE, because envy-freeness demands:
\[
p_A+p_B \geq 10, \quad p_A+p_C \geq 10, \quad \text{and} \quad p_B+p_C \geq 10
\]
\noindent which combine to yield $p_A+p_B+p_C\geq 15$, a violation of IR with a payment of $15>12$, the value to the winning bidder. It is easy to show that the associated WDP has a valid cut:
\[
x_1+x_2+x_3+x_4 \leq 1
\]
\noindent Being a valid cut, adding this new row to $\bm{A}$ will not change the set of feasible solutions to WDP. Such an addition is equivalent to considering a similar auction given by $b_1(ABD)= b_2(ACD)= b_3(BCD)=10$; $b_4(ABCD)=12$, i.e., by adding a new item $D$ to every bundle. In this equivalent auction, WE prices are now given by $p=[0,0,0,10]$, which can be found using WDP-quad-dual.

Intuitively, the cut says, only one of these four bids can win. Thus, if we were to make a license $D$ and require that a bidder must buy $D$ to be eligible to win at least two items from $\{A,B,C\}$, the outcome would be equivalent to a second-price auction for $D$ and awarding $\{A,B,C\}$ each at zero (additional) cost. Though the auctioneer may have anticipated this and structured the items for sale to include such a license $D$, the point here is that she didn't have to; the auction format we propose finds and exploits the relevant structure to produce a WE, learning about the market and how to effectively price the items (including by possibly adding valid artificial items) based on the bids. 

One might also notice that the new WE outcome with $p_D=10$ is equivalent to the Vickrey-Clarke-Groves (VCG) auction outcome on the original bids. While our cut procedure does provide a price linearization of the VCG outcome on some problems like this one, this is not always the case, as we will see below with Example~\ref{ex:Five} and Corollary~\ref{VCG_PME}.

\begin{example}
\label{ex:Two}
$\bm{b_1(9A)= 85};\quad b_2(4A)= 40;\quad b_3(4A)= 40;\quad b_4(4A)= 35;$
\end{example}

\noindent Here, with a supply of $c_A=9$ units of a single type, and again single-minded bidders, $b_2(4A)=40$ dictates that $4p_A\geq 40$ and therefore $p_A\geq 10$. But this cannot be satisfied with IR constraint $9 p_A\leq 85$, so no (natural) WE is possible.

By multiplying the lone row of $\bm{A}$ by $0.25$ and rounding down non-integer coefficients (an instance of the Chv\'{a}tal-Gomory process), we find the cut $2x_1+x_2+x_3+x_4 \leq 2$, which can be priced $p=[0, 40]$, resulting in AWE. Here, the added cut can be thought of as a fictional item packaging $4$ items together, because the losing bids indicate that this is the minimum size of a valuable package. The winning bid on $9$ items therefore must buy two copies of this packaged quartet of items (reflected by the cut coefficient of $2$ for $x_1$) and receive one original (``unpackaged'') item free of charge, altogether paying $\rho_1=80$. 

\begin{example}
\label{ex:Three}
$\begin{aligned}[t]
&b_1(AB)= 18;\quad b_2(BC)= 17;\quad \bm{b_3(CD)= 18};\\
&b_4(DE)= 16;\quad \bm{b_5(AE)= 19};\quad b_6(ABCDE)= 34\\
\end{aligned}$
\end{example}

\noindent Again with single-minded bidders and unique items, here only two of the first five bundles could ever be accepted (considering only feasibility of item bundles), or, the sixth bundle could be accepted and knock out any other two bids. This statement is not difficult to read intuitively from the generated cut:
\[
x_1+x_2+x_3+x_4+x_5+2x_6 \leq 2
\]
\noindent Though no WE exists before the cut, with the cut we have a WE with $p=[2,0,1,0,0,16]$, resulting in $\rho_3=17$ and $\rho_5=18$.

Of interest, unlike the previous examples, this equilibrium uses non-zero prices on some original items in addition to pricing the generated artificial item. Items $D$ and $E$ are in the lesser valued losing bid $4$, which sets the value of the added cut item at $16$. This is the minimum price to take two items from $\{A,B,C,D,E\}$. With $p_B=0$ in a WE because $B$ does not sell, setting $p_A=2$ and $p_C=1$ covers the entire ``upsell'' above $16$ to satisfy envy-freeness on $b_1(AB)$ and $b_2(BC)$. Bid $6$ would face a price of $35$ (the price of $A$, $C$, and $2$ copies of the added item, per the added cut) which she does not envy. This cut demonstrates the ``odd-hole'' property discussed in the valid-cut literature.

\subsection{Price-Match Equilibria for CAs}

While it is possible to add just enough cuts to guarantee WE, we propose going further, with deeper cuts yielding what we call a \emph{Price-Match Equilibrium} (PME) which we now motivate. 

\citet{AG2005} define a \emph{winning level} of a bundle $a$ at any instant in a continuous efficient CA as the amount needed for a new bid on $a$ to just become winning, given the bids submitted up until that point in time. They show that it can be computed in our notation (before the new bid has been added to $\mathcal{K}$) as $w(\mathcal{I}, \mathcal{J}, \mathcal{K})-w(\mathcal{I}, \mathcal{J}\setminus a, \mathcal{K})$, where $\mathcal{J}\setminus a$ represents the same item classes but with supply vector reduced to $c-a$.

Each payment in Examples~\ref{ex:One}, \ref{ex:Two}, and \ref{ex:Three} is a winning level under this definition. For instance, in Example~\ref{ex:One}, remove the winning bid $k=4$ and find the level needed to just barely becoming winning (or to become strictly winning if just $\epsilon$ higher). Compute $w(\mathcal{I}\setminus 4, \mathcal{J}, \mathcal{K}\setminus 4)=10$ and $w(\mathcal{I}\setminus 4, \mathcal{J}\setminus ABC, \mathcal{K}\setminus 4)=0$, and thus $\rho_4=10-0$, consistent with the AWE computed for this example. Despite the prices having the winning-level property in these first examples, this is not true in general.

\begin{proposition}
\label{minWEgreater}
Minimal Walrasian Equilibrium prices may be strictly greater than winning-level payments. 
\end{proposition}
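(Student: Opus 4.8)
The statement is existential, so the plan is to exhibit a single CA in which a natural WE genuinely exists—making ``minimal WE prices'' well-defined without any cuts—and then to show that the payment charged to some winning bidder under the bidder-optimal WE strictly exceeds her winning-level payment. Before committing to an instance I would record a structural identity that clarifies the target: for a winner $i$ holding an efficient bundle $a^{i*}$, the winning-level payment $w(\mathcal{I}\setminus i,\mathcal{J},\mathcal{K}\setminus i)-w(\mathcal{I}\setminus i,\mathcal{J}\setminus a^{i*},\mathcal{K}\setminus i)$ equals the VCG payment of $i$. This is because the other winners' bundles in the efficient allocation form a feasible allocation of the reduced-supply problem and conversely any such allocation extends by re-awarding $a^{i*}$ to $i$, giving $w(\mathcal{I}\setminus i,\mathcal{J}\setminus a^{i*},\mathcal{K}\setminus i)=w(\mathcal{I},\mathcal{J},\mathcal{K})-b_i(a^{i*})$. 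Hence the proposition is really the claim that minimal-WE revenue can strictly exceed VCG revenue, consistent with the revenue chain (\ref{REV}) being strict, and I only need an instance with a natural WE but superadditive competition among the losers.

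A clean witness uses two unique items and three single-minded bidders, $b_1(A)=b_2(B)=2$ and $b_3(AB)=2$. I would first verify that WDP-LP is integral: the efficient allocation awards $A$ to bidder $1$ and $B$ to bidder $2$ for total value $4$, bidder $3$ loses, and a short LP check (summing the two supply rows) shows the fractional optimum is also $4$. So a natural WE exists and minimal-WE prices are the relevant object. Complementary slackness on the two winning bids forces $p_A\le 2$ and $p_B\le 2$, while the losing bid $b_3(AB)$ imposes envy-freeness $p_A+p_B\ge 2$. Minimizing total revenue over this region gives revenue $2$, attained on the segment $p_A+p_B=2$, and the quadratic tie-break of the bidder-optimal program (WDP-quad-dual) selects the symmetric point $p_A=p_B=1$; thus each winner pays $1$.

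Finally I would compute the winning levels. Removing bidder $1$, both $w(\mathcal{I}\setminus 1,\mathcal{J},\mathcal{K}\setminus 1)$ and $w(\mathcal{I}\setminus 1,\mathcal{J}\setminus A,\mathcal{K}\setminus 1)$ equal $2$—in each case bidder $2$ takes $B$ and bidder $3$ cannot be additionally accommodated—so bidder $1$'s winning-level payment is $0$, and by symmetry the same holds for bidder $2$. Therefore the minimal-WE payment of $1$ strictly exceeds the winning-level payment of $0$ for each winner, which establishes the proposition. I expect no deep obstacle here; the only care needed is in the two verification steps rather than any hard argument: one must confirm that the chosen instance admits a natural WE (no integrality gap), so the claim concerns an honest minimal WE and not merely an AWE, and one must check envy-freeness precisely enough to see that the loser's constraint $p_A+p_B\ge 2$ lifts the minimal WE strictly above the VCG/winning-level floor.
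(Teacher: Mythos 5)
Your proof is correct, and it follows the same strategy as the paper: exhibit a CA instance with no integrality gap (so a natural WE exists and no cuts are needed) and show that every minimal WE forces a winner's payment strictly above her winning level. The only difference is the witness. The paper reuses Example~\ref{ex:One} with the winning bid raised from $12$ to $16$ (Example~\ref{ex:OneDotOne}), giving minimal WE revenue $15$ against a winning level of $10$; you use the two-item instance $b_1(A)=b_2(B)=2$, $b_3(AB)=2$, giving minimal WE revenue $2$ against winning levels of $0$. Your instance is slightly simpler and your computations (LP integrality via summing the supply rows, the envy-freeness region $p_A\le 2$, $p_B\le 2$, $p_A+p_B\ge 2$, and the winning-level calculation) all check out; the paper's choice buys narrative continuity, since the same cut $x_1+x_2+x_3+x_4\le 1$ from Example~\ref{ex:One} then repairs Example~\ref{ex:OneDotOne} down to the PME payment of $10$. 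Your preliminary observation that a winner's winning-level payment coincides with her VCG payment is also correct and consistent with the paper's revenue chain (\ref{REV}), though it is not needed for the proposition itself.
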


The proof is by example. We alter Example~\ref{ex:One} by raising the winning bid amount from $12$ to $16$, relaxing $b_4(ABC)$'s IR constraint:

\defsubexample{ex:One}
\begin{subexample}{ex:One}
\subexamplelabel{ex:One}{ex:OneDotOne}
$b_1(AB)= b_2(AC)= b_3(BC)=10;\quad \bm{b_4(ABC)=16}$
\end{subexample}

\noindent Now with a higher winning bid, WDP and WDP-LP no longer have an integrality gap. WDP-quad-dual prices $p=[5,5,5]$ give a WE with $\rho_4=15$ for Example~\ref{ex:OneDotOne} (compared to $\rho_4=10$ in Example~\ref{ex:One}, after adding a cut). Note that $p$ is minimal in the available payment space, i.e, with no cut and pricing only three items. This provides strong motivation to add a cut, even when there is no integrality gap; the seller has no feasible way to generate revenue of more than $10$ without bidder $4$, so why should bidder $4$ be forced to pay 50\% more? Applying the same cut as in Example~\ref{ex:One} to Example~\ref{ex:OneDotOne}, results in the same $\rho_4=10$, which again is the VCG outcome.

Indeed, in Example~\ref{ex:One}, without the cut, the first three bids would be ``half accepted'' by solution $x^f=[0.5, 0.5, 0.5, 0]^T$ to WDP-LP. In doing so, it's as if the three bids are allowed to represent $15$ units of value, when really they only generate $10$ units of value integrally. In Example~\ref{ex:OneDotOne} (if a cut is not applied), the raised bid ``hides'' this non-convex over-influence of these three bids, allowing them to drive prices up beyond their ability to match those prices in any integral solution. 

To further refine the winning-level notion for the current context (i.e., a multi-unit, sealed-bid CA with linear prices and multi-minded bidders), consider the following definition. As is typical, we use a $-i$ superscript to denote a situation in which bidder $i$ is removed or zeroed out while all other bidders remain.

\begin{definition}[Price Match Equilibrium] \label{def:PME}
A WE allocation $x^*$ and price vector $p$ constitute a \emph{Price-Match Equilibrium} (PME) if and only if $\forall i$, there exists an allocation $x^{-i}\in \{0,1\}^K$ with $x^{-i}_k=0$ for $k \in \mathcal{K}_i$ and for which the seller could generate the same total payments (i.e., for which $p\bm{A} x^{-i}= p\bm{A} x^*$) subject to IR with respect to bids.
\end{definition}

Next, let $K(p)$ represent the original bids but with amounts $b$ changed to $b^p$ where $b^p_k=pa^k$ if $pa^k \leq b_k$, $b^p_k=0$ otherwise, i.e., $K(p)$ replaces each bid amount with zero or the total linear price for the associated bundle $a^k$ according to $p$ whenever that price would yield non-negative surplus. We then have the following:

\begin{proposition}
\label{prop:winninglevel}
WE allocation-price pair $(x^*,p)$ is a PME if and only if:
\[
\rho_i=pa^{i*}=w(\mathcal{I}\setminus i, \mathcal{J}, \mathcal{K}(p))-w(\mathcal{I}\setminus i, \mathcal{J}\setminus a^{i*}, \mathcal{K}(p)) \qquad \forall i \in \mathcal{I}
\]
\end{proposition}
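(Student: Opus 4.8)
The plan is to prove the two-sided implication by reducing both the PME condition and the claimed winning-level identity to a single scalar statement about the maximum revenue the seller can extract from the remaining bidders, and then to notice that one of the two winning-level terms is in fact pinned down in \emph{every} WE, so that only the other term carries information.

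First I would unpack $w(\mathcal{I}\setminus i, \mathcal{J}, \mathcal{K}(p))$. By construction of $\mathcal{K}(p)$, any bid $k$ with $pa^k > b_k$ has modified amount $0$, so selecting it merely consumes supply without contributing to the objective; hence there is an optimal solution that avoids all such bids, and on the surviving (IR) bids the objective $b^p x$ coincides with $p\bm{A}x$. This identifies $w(\mathcal{I}\setminus i, \mathcal{J}, \mathcal{K}(p))$ with the largest total payment $p\bm{A}x^{-i}$ achievable by a feasible allocation $x^{-i}$ that excludes bidder $i$ and is IR with respect to bids. Under this reading, Definition~\ref{def:PME} says $(x^*,p)$ is a PME for $i$ exactly when this maximum can be made equal to $p\bm{A}x^*$.

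Next I would record the two elementary bounds that do the real work, all flowing from $\bm{A}x\le c$, $p\ge 0$, and the WE complementary-slackness identity $p\bm{A}x^* = pc$. On one side, every allocation feasible for full supply satisfies $p\bm{A}x^{-i}\le pc = p\bm{A}x^*$, so $w(\mathcal{I}\setminus i, \mathcal{J}, \mathcal{K}(p)) \le p\bm{A}x^*$ \emph{always}; since PME demands an exact match to $p\bm{A}x^*$, this forces PME-for-$i$ to be equivalent to $w(\mathcal{I}\setminus i, \mathcal{J}, \mathcal{K}(p)) = p\bm{A}x^*$ (the maximizer then supplies the witness $x^{-i}$). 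On the other side, any allocation feasible for the reduced supply gives $w(\mathcal{I}\setminus i, \mathcal{J}\setminus a^{i*}, \mathcal{K}(p)) \le p(c-a^{i*}) = p\bm{A}x^* - \rho_i$, while restricting $x^*$ to the bidders other than $i$ is feasible for supply $c-a^{i*}$, IR, and collects $p\bm{A}x^* - \rho_i$; hence $w(\mathcal{I}\setminus i, \mathcal{J}\setminus a^{i*}, \mathcal{K}(p)) = p\bm{A}x^* - \rho_i$ \emph{unconditionally}, for every WE.

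Combining the two finishes the argument: substituting the always-true value of the reduced-supply term, the claimed identity $\rho_i = w(\mathcal{I}\setminus i, \mathcal{J}, \mathcal{K}(p)) - w(\mathcal{I}\setminus i, \mathcal{J}\setminus a^{i*}, \mathcal{K}(p))$ collapses to $w(\mathcal{I}\setminus i, \mathcal{J}, \mathcal{K}(p)) = p\bm{A}x^*$, which is precisely the PME condition for $i$ derived above; applying this for every $i$ yields the stated equivalence. I expect the main obstacle to be the bookkeeping in the first step—justifying that non-IR bids may be dropped so that $b^p x$ becomes $p\bm{A}x$, and that ``matching the payment'' is genuinely the value of a maximization—after which the two supply bounds and complementary slackness make the remainder essentially mechanical. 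The one point I would take care to state cleanly is \emph{why exact equality} (not merely ``$\ge$'') is automatic, which is exactly the upper bound $w(\mathcal{I}\setminus i, \mathcal{J}, \mathcal{K}(p)) \le p\bm{A}x^*$.
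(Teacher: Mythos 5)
Your proof is correct and follows essentially the same route the paper takes: it pins down the reduced-supply term as $\sum_{\bar\imath\neq i}\rho_{\bar\imath}=p\bm{A}x^{*}-\rho_i$ in any WE (the identity the paper states just before Proposition~\ref{prop:noloss}) and reduces the claimed equality to $w(\mathcal{I}\setminus i,\mathcal{J},\mathcal{K}(p))=p\bm{A}x^{*}$, which is the PME condition via the complementary-slackness upper bound. Your explicit handling of non-IR bids in $\mathcal{K}(p)$ just fills in details the paper leaves implicit.
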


According to this result, in a PME each winning bidder's payment $\rho_i$ is (simultaneously) in the functional form of a winning level; each $i$ sees prices that the other bidders are willing to pay, both with and without her own consumption of $a^{i*}$, and thus that her own payment if lowered would not be winning. For any CA with IR, the condition is trivially satisfied for all non-winning $i$ paying $\rho_i=0$. The WE outcomes of Examples~\ref{ex:One}, \ref{ex:Two}, and \ref{ex:Three} can easily be shown to be PMEs, while Example \ref{ex:OneDotOne} with $p=[5,5,5]$ is a WE but not a PME. 

Directly from the definition of PME, the second-price auction for a single item has a PME outcome. Similarly, the Generalized Second-Price Auction, in which the winner of the $n$th slot pays the amount of the $n+1$st bid (famous for its use in Google's ad-word auctions), prescribes a PME outcome: if any bidder were to leave, each bidder in a worse-ranked slot could be moved up one slot without violating IR.  \citet{leonard1983elicitation} shows that when unit-demand holds, VCG is the unique minimal NWE. Together with Corollary \ref{VCG_PME} below, we have:

\begin{proposition}
    Unit demand is sufficient to guarantee min-NWE = PME = MRC = VCG.
\end{proposition}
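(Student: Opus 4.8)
The plan is to obtain the four-way coincidence as a short chain: import one external coincidence, add one classical core fact, and close the PME end with Corollary~\ref{VCG_PME}. First I would record what the hypothesis buys directly. By \citet{dSV07}, coalitional submodularity guarantees that a natural WE exists and that its \emph{minimal} (bidder-optimal) revenue coincides exactly with the VCG payments, giving $\text{min-NWE}=\text{VCG}$ and disposing of one of the equalities outright.

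Next I would establish $\text{MRC}=\text{VCG}$. The revenue characterization (\ref{REV}) already gives $\text{MRC}\geq\text{VCG}$, so only the reverse inequality is needed. Here I would invoke the classical fact (\citealt{AusMil02}) that coalitional submodularity forces the VCG payoff vector \emph{into} the core, and indeed to be the \emph{bidder-optimal} core point --- the point that simultaneously maximizes every bidder's payoff and hence uniquely minimizes seller revenue over the core. Since MRC is by definition the minimum-revenue core point, it must coincide with this bidder-optimal point; thus $\text{MRC}=\text{VCG}$, not merely in revenue but as a payment vector.

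It remains to place the PME outcome at the same point. By the previous step the VCG payoff lies in the core, so Corollary~\ref{VCG_PME} applies and certifies that this common outcome satisfies Definition~\ref{def:PME}; that is, the $\text{min-NWE}=\text{MRC}=\text{VCG}$ point is itself a PME. To pin the PME \emph{value}, I would use (\ref{REV}) as a squeeze: any PME is an AWE and hence core-selecting, so its revenue is at least $\text{MRC}=\text{VCG}$, while the minimally-artificial PME obeys $\text{min-NWE}\geq\text{MAP}\geq\text{MRC}$; with the two outer terms now equal to VCG, the middle is forced to VCG as well. All four quantities therefore collapse to the single VCG point.

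The main obstacle I anticipate is the final step: upgrading the equalities from \emph{revenue} totals to the \emph{payment/price} point, and matching the hypothesis of Corollary~\ref{VCG_PME} (VCG-in-core) to precisely what submodularity delivers. The uniqueness of the minimum-revenue core point under submodularity --- that the bidder-optimal core point is a single well-defined split of surplus rather than a whole face --- is what makes ``$\text{MRC}=\text{VCG}$'' a point equality and lets the squeeze close cleanly; I would want to confirm that the quadratic tie-breaking used to define MRC indeed lands on this unique point.
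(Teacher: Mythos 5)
Your proposal is correct and follows essentially the same route as the paper, which justifies the proposition in one line by citing \citet{dSV07} for min-NWE $=$ VCG and Corollary~\ref{VCG_PME} (via the \citet{AusMil02} VCG-in-core characterization giving VCG $=$ MRC) for the PME part. Your write-up simply makes explicit the squeeze via (\ref{REV}) and the point-versus-revenue distinction that the paper leaves implicit.
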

\noindent where min-NWE is the payment-minimal NWE, which is unique in that setting.

In a WE, the seller prefers to sell the efficient allocation, here represented as: $p\bm{A}x^*=w(\mathcal{I}, \mathcal{J}, \mathcal{K}(p))$. From WDP's definition, removing a bidder yields $\sum_{\bar{\imath}\in \mathcal{I}\setminus i}\rho_{\bar{\imath}}=w(\mathcal{I}\setminus i, \mathcal{J}\setminus a^{i*}, \mathcal{K}(p))$, which added to the condition of Proposition~\ref{prop:winninglevel} yields an alternate characterization of PME:
\begin{proposition} \label{prop:noloss}
WE allocation-price pair $(x^*,p)$ is a PME if and only if:
\[
p\bm{A}x^*=w(\mathcal{I}\setminus i, \mathcal{J}, \mathcal{K}(p)) \qquad \forall i \in \mathcal{I}
\]
\end{proposition}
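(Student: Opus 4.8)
The plan is to obtain the new characterization as a purely algebraic consequence of Proposition~\ref{prop:winninglevel}, using the two revenue identities flagged in the preceding paragraph. Proposition~\ref{prop:winninglevel} already states that $(x^*,p)$ is a PME if and only if, for every $i$, $\rho_i = w(\mathcal{I}\setminus i, \mathcal{J}, \mathcal{K}(p)) - w(\mathcal{I}\setminus i, \mathcal{J}\setminus a^{i*}, \mathcal{K}(p))$. So it suffices to show that, for \emph{any} WE (not just a PME), the subtracted term can be rewritten as $w(\mathcal{I}\setminus i, \mathcal{J}\setminus a^{i*}, \mathcal{K}(p)) = p\bm{A}x^* - \rho_i$; substituting this into the Proposition~\ref{prop:winninglevel} equation then cancels $\rho_i$ and leaves precisely the claimed equality.

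First I would establish that auxiliary identity, which is the only step carrying real content. The lower bound is immediate: restricting $x^*$ to the bidders in $\mathcal{I}\setminus i$ stays feasible under the reduced supply $c - a^{i*}$ (it uses $\bm{A}x^* - a^{i*} \le c - a^{i*}$) and violates no remaining per-bidder XOR constraint, so it is a candidate allocation in $\mathcal{K}(p)$ worth $\sum_{\bar{\imath} \in \mathcal{I}\setminus i}\rho_{\bar{\imath}} = p\bm{A}x^* - \rho_i$, where each remaining winner contributes $b^p_k = pa^k$ by IR with respect to bids. For the matching upper bound, any feasible $x$ in the reduced problem has modified-bid value $\sum_k b^p_k x_k \le \sum_k (pa^k)x_k = p\bm{A}x \le p(c - a^{i*})$, using $b^p_k \le pa^k$ for every bid, supply feasibility, and $p \ge 0$; the WE market-clearing condition $p(c-\bm{A}x^*)=0$ gives $pc = p\bm{A}x^*$, hence $p(c - a^{i*}) = p\bm{A}x^* - pa^{i*} = p\bm{A}x^* - \rho_i$. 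The two bounds coincide, proving the identity.

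With the identity in hand, I would substitute $w(\mathcal{I}\setminus i, \mathcal{J}\setminus a^{i*}, \mathcal{K}(p)) = p\bm{A}x^* - \rho_i$ into the Proposition~\ref{prop:winninglevel} condition; the $\rho_i$ on the left cancels the $+\rho_i$ induced on the right, collapsing the per-bidder equation to $p\bm{A}x^* = w(\mathcal{I}\setminus i, \mathcal{J}, \mathcal{K}(p))$. Because the auxiliary identity holds unconditionally for every WE allocation-price pair and never presupposes PME, the substitution is a genuine equivalence rather than a one-way implication, so the quantifier ``$\forall i$'' passes through intact and the biconditional of Proposition~\ref{prop:winninglevel} transfers directly to the new statement.

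The main obstacle I anticipate is the upper-bound half of the auxiliary identity: this is exactly where the WE market-clearing (complementary-slackness) condition is indispensable, since without $pc = p\bm{A}x^*$ one would only obtain $w(\mathcal{I}\setminus i, \mathcal{J}\setminus a^{i*}, \mathcal{K}(p)) \le pc - \rho_i$, which need not equal the revenue actually collected from the remaining bidders. I would also take care that $b^p_k \le pa^k$ is invoked in the correct direction for every bid — including those zeroed out in $\mathcal{K}(p)$ — and that deleting bidder $i$'s columns only tightens, never enlarges, the feasible region of the reduced WDP.
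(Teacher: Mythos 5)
Your proof is correct and follows essentially the same route as the paper, which likewise derives the result by adding the identity $\sum_{\bar{\imath}\in \mathcal{I}\setminus i}\rho_{\bar{\imath}}=w(\mathcal{I}\setminus i, \mathcal{J}\setminus a^{i*}, \mathcal{K}(p))$ to the condition of Proposition~\ref{prop:winninglevel}. In fact you supply more detail than the paper does: the two-sided bound establishing that identity (feasibility of the restricted $x^*$ for the lower bound, and $b^p_k\le pa^k$ plus complementary slackness for the upper bound) is merely asserted there, and your explicit verification is sound.
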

PMEs are a (sometimes strict) subset of the AWEs associated with a CA. Examples below show that for the same CA instance, several PMEs may be available, with some relying more heavily on artificial item prices than others, leading us to the following two concepts.  

\begin{definition}
    The \emph{total artificiality} of an AWE is $\sum_{\mathcal{J}^+}\sum_{\mathcal{K}}p_ja_{jk}$, where $\mathcal{J}^+$ is the set of all added artificial items.
\end{definition}
\begin{definition}
    A \emph{minimally artificial PME} (MAP) is a PME for which total artificiality is minimal among all PMEs.
\end{definition}
Note that total artificiality includes all bids of all bidders (not just winning bids) in its summation. MAP outcomes will be shown to always exist for a CA and may not be unique. Directly from the definition, when a NWE exists, all NWE prices are MAP outcomes with zero total artificiality.
\section{Comparison to Min-Revenue Core-Selecting and VCG Mechanisms}
Using our notation, truth-inducing VCG payments are computed as:
\begin{equation}
\label{eq:VCG}
\rho^{VCG}_i=w(\mathcal{I}\setminus i, \mathcal{J}, \mathcal{K})-w(\mathcal{I}\setminus i, \mathcal{J}\setminus a^{i*}, \mathcal{K})
\end{equation}
\noindent Minimum-Revenue Core-Selecting (MRC) auctions minimize $\sum_{i\in \mathcal{I}}\rho_i$ such that:
\begin{equation}
\label{eq:coreConst}
\sum_{i \in \mathcal{C}}(b_{i*}-\rho_i) +  \sum_{i \in \mathcal{I}}\rho_i\geq w(\mathcal{C}, \mathcal{J}, \mathcal{K^{\mathcal{C}}}) \quad\forall \mathcal{C} \subseteq \mathcal{I}
\end{equation}
\noindent where $\mathcal{K^{\mathcal{C}}}$ is shorthand for $\cup_{i \in \mathcal{C}} \mathcal{K}_i$.

\begin{proposition}\label{WEcore}
    Any WE pair $(x^*,p)$ is core-selecting.
\end{proposition}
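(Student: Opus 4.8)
The plan is to exploit linear programming duality, treating the WE price vector $p$ together with the induced surplus terms as a feasible dual solution not only for the full WDP-LP but also for every coalitional subproblem $w(\mathcal{C}, \mathcal{J}, \mathcal{K}^{\mathcal{C}})$. Setting $s_i := b_{i*}-\rho_i$, the core constraint (\ref{eq:coreConst}) rearranges into the equivalent statement $\sum_{i \in \mathcal{I}} \rho_i + \sum_{i \in \mathcal{C}} s_i \geq w(\mathcal{C}, \mathcal{J}, \mathcal{K}^{\mathcal{C}})$, so my target is exactly an upper bound of this form on each coalitional optimum.

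First I would record two consequences of the WE definition. From envy-freeness and IR (with respect to bids), the pair $(p,s)$ with $s_i = b_{i*}-\rho_i \geq 0$ satisfies every dual constraint (\ref{dual-env}): for bidder $i$'s winning bid the constraint holds with equality, while for any other bid $k' \in \mathcal{K}_i$ envy-freeness gives $b_{i*}-\rho_i \geq b_{k'}-pa^{k'}$, i.e. $pa^{k'}+s_i \geq b_{k'}$, with $s_i=0,\ \rho_i=0$ for non-winners by normalization. Hence $(p,s)$ is feasible for (\ref{WDP-dual}). Second, the WE market-clearing condition $p(c-\bm{A}x^*)=0$ yields $pc = p\bm{A}x^* = \sum_i pa^{i*} = \sum_i \rho_i$, so the value of the endowment at WE prices equals total revenue.

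Next I would fix an arbitrary coalition $\mathcal{C}$. The subproblem $w(\mathcal{C},\mathcal{J},\mathcal{K}^{\mathcal{C}})$ is the same WDP retaining only the bids in $\mathcal{K}^{\mathcal{C}}$; its LP relaxation has dual objective $pc + \sum_{i\in\mathcal{C}} s_i$ and dual constraints indexed by $k \in \mathcal{K}^{\mathcal{C}}$. Since these constraints form a subset of the full dual constraints (\ref{dual-env}), the truncated solution $(p, s|_{\mathcal{C}})$ is automatically feasible for the coalitional dual, with $p,\ s|_{\mathcal{C}} \geq 0$ as required by (\ref{dualnneg}). Weak duality then gives $w(\mathcal{C},\mathcal{J},\mathcal{K}^{\mathcal{C}}) \leq w^f(\mathcal{C},\mathcal{J},\mathcal{K}^{\mathcal{C}}) \leq pc + \sum_{i\in\mathcal{C}} s_i$, the first inequality being the standard bound of an integer program by its relaxation. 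Substituting $pc = \sum_i \rho_i$ and unwinding $s_i = b_{i*}-\rho_i$ reproduces (\ref{eq:coreConst}), and since $\mathcal{C}$ was arbitrary the claim follows.

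I expect the main obstacle to be careful bookkeeping when the WE is actually an AWE: then $\bm{A}$ and $c$ carry rows and entries for the priced cuts, and I must confirm both that the clearing identity $pc=\sum_i\rho_i$ still holds with the cut components included in $\rho_i = pa^{i*}$, and that restricting a valid cut to the bids $\mathcal{K}^{\mathcal{C}}$ (by zeroing out the excluded columns) leaves it valid for the coalition, so that $w(\mathcal{C},\mathcal{J},\mathcal{K}^{\mathcal{C}})$ is unchanged whether or not the cut rows are carried into the subproblem. Both points are immediate from the definition of a valid cut, so the duality argument is unaffected; the essential verification is simply that $(p, s|_{\mathcal{C}})$ remains dual-feasible across all coalitional constraints, which I have reduced above to a consequence of full-problem envy-freeness.
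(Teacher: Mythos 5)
Your proposal is correct and follows essentially the same route as the paper: the paper's sketch says to derive each core constraint by adding the envy-freeness (dual) constraints indexed by the winning bids of the coalitional optimum $w(\mathcal{C},\mathcal{J},\mathcal{K}^{\mathcal{C}})$, which is exactly the weak-duality bound you obtain from feasibility of $(p,s|_{\mathcal{C}})$ in the coalitional dual together with $pc=\sum_i\rho_i$. Your write-up simply makes the duality bookkeeping (and the AWE/cut-row case) explicit where the paper leaves it as a sketch.
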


The proof sketch is that core-selecting constraints as in Equation~\ref{eq:coreConst} can be derived by simply adding the relevant envy-freeness constraints (\ref{dual-env}) indexed by winning bids in the $w(\mathcal{C}, \mathcal{J}, \mathcal{K^{\mathcal{C}}})$ outcome. (This result is well known in various forms. See, e.g., \citealt{bikhchandani2002package}.) The converse does not directly hold: not all core outcomes can be reached via the currently available items prices. (This will be addressed later by Theorem \ref{cuttocore}, however.)  

To demonstrate this important difference between MAP prices and the VCG and MRC benchmarks from the literature, consider the following numerical example, adapted from \citet{B13}:

\begin{example}
\label{ex:Four}
$\begin{aligned}[t]
&\bm{b_1(AC)= 21}; \quad b_1(A)= 10; \quad b_1(C)= 11;\\
&b_2(BC)= 20;\quad \bm{b_2(B)= 10}; \quad b_2(C)= 10;\\
&b_3(AB)= 10\\
\end{aligned}$
\end{example}

\begin{figure}
    \centering
    \includegraphics[width=0.5\linewidth]{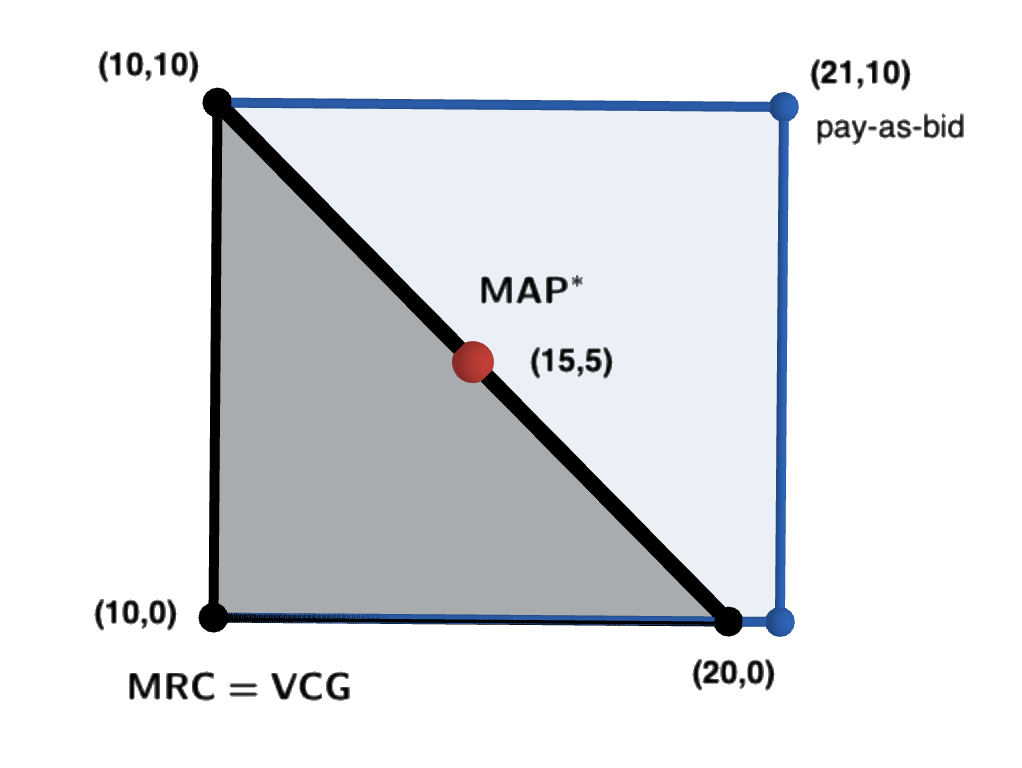}
    \caption{The core of Example~\ref{ex:Four} in the $\bm{\rho_1}$,$\bm{\rho_2}$ plane. The lighter region consists of NWE payment outcomes while the AWE $\bm{=}$ PME outcomes form the gray shaded triangle. The thicker line segment intersection of these two regions is the image of MAP prices on the core. WDP-quad-dual selects MAP$^*$ (red point) from among the PME points. The pay-as-bid and MRC $\bm{=}$ VCG points are also labeled.}
    \Description{}
    \label{fig:BO}
\end{figure}
\noindent \citet{B13} demonstrates a weakness of MRC, referred to as the ``unrelated goods problem,'' motivated by the observation that bidders $1$ and $2$ have preferences for $C$ that simply add to their values for either $A$ or $B$, respectively. An auctioneer considering holding separate MRC auctions for $\{A,B\}$ and $\{C\}$, or selling $\{A,B,C\}$ together would determine:
\begin{align*}
&\rho^{MRC}_1 = 5\text{ for A}&&\rho^{MRC}_2 = 5\text{ for B} &&&\text{when selling } \{A,B\}\\
&\rho^{MRC}_1 = 10\text{ for C} && &&&\text{when selling } \{C\}\\
&\rho^{MRC}_1 = 10 \text{ for AC} &&\rho^{MRC}_2 = 0 \text{ for B} &&&\text{when selling } \{A,B,C\}
\end{align*}
\noindent and observe that total revenue goes down drastically (from $20$ to $10$) when combining the $\{A,B\}$ auction with that of an ``unrelated good'' $\{C\}$ into a single auction. \citet{B13} notes that this shows a similar ``low-revenue'' problem as VCG, observing that total payments collected by VCG are $10$ whether these two auctions are conducted separately or together. With this example as motivation, we observe the unrelated goods problem vanishes for MAP prices as proposed here.

Indeed, for Example~\ref{ex:Four}, we have WDP-quad-dual prices $p=[5,5,10]$ forming a PME without the addition of any artificial cut items. Further, we have the same prices supporting two auctions if $\{A,B\}$ and $\{C\}$ are conducted separately, with total revenue of $20$ for the seller in either case. One might also observe that any $p$ with $p_A+p_B=10$ and $p_c=10$ are sufficient for PME here, or that any $\rho^{MRC}_1+\rho^{MRC}_2=10$ satisfy MRC when selling just $\{A,B\}$; this motivates our use of quadratic rules to specify a unique pricing, to select exactly MAP$^*$, as shown in Figure \ref{fig:BO}.

Intuitively, because MRC only demands personalized payments, in Example~\ref{ex:Four} bidders $2$ and $3$ cannot ``tell the difference'' between payments for $A$ and payments for $C$, and thus a form of ``double counting'' occurs. Bidder $2$ sees bidder $1$'s MRC payment of $10$ and would not (in a hypothetical sense) attempt to block the outcome and pay the seller more, as if bidder $1$'s payment was for $C$. But at the same time, bidder $3$ sees bidder $1$'s payment and also cannot block, as if the payment of $10$ was for $A$. By accounting for the prices of items separately, MAP prices avoid this double counting, and thus remain greater than VCG = MRC for this example.

\section{Generating Cuts for WE and PME}
We first consider generating a single valid cut. Let its coefficients be $\alpha_k$ for each $k \in \mathcal{K}$, collectively the vector $\alpha$, and its right-hand side be $\alpha_0$:
\[
\alpha x \leq \alpha_0
\]
which must hold for any WDP-feasible allocation $x\in \{0,1\}^K$. These $\alpha$ terms will be variables during the cut generation process and then later treated as constants in the final price determination, functioning exactly as all other items in $\mathcal{J}$, and concatenated as a new row of the WDP matrix $\bm{A}$.
This new cut will thus be treated as an \emph{artificial item} in the auction. Eventually, we prefer a collection of artificial items with each $(\alpha,\alpha_0)\in \mathbb{Z}^{K+1}_{\geq0}$ for interpretability, but for intermediate steps in our reasoning and algorithms we will also consider ``unscaled'' or continuous valid cuts, which we denote with $(\phi,\phi_0)\in \mathbb{R}^{K+1}_{\geq0}$.
  
Any newly added artificial item must be \emph{fully demanded} at our fixed WDP solution $x^*$, i.e., $\alpha x^*=\alpha_0$ or $\phi x^*=\phi_0$. That is every copy of an artificial item must be purchased by the winning bids. If not, WE fails by definition, and pricing an excess-supply item may violate the core property, even when envy-freeness constraints (\ref{dual-env}) are satisfied. Because the preponderance of previous work on cut generation was used to \emph{solve} IPs like WDP, this assumption that we \emph{know} $x^*$ and constrain that $\alpha x^* = \alpha_0$ when looking for cuts is peculiar to the current line of research, to our knowledge.


\subsection{Cuts for WE} \label{cutsforWE}

Though it might be computationally expensive in practice, one can in theory enumerate all set-wise maximal feasible solutions to WDP, those for which no other bid could be feasibly added. Let all such maximal feasible solutions be indexed by $\ell \in \mathcal{L}$ with elements $x^\ell$. If we construct the matrix of directions to optimality, $\bm{D} = [x^* - x^1, x^*-x^2,..., x^*-x^{|\mathcal{L}|}]$, then one direct approach to find a cut $(\alpha,\alpha_0)$ is by solving:
\begin{align}
    \max\quad & \alpha x^f \label{cutWE-obj} \tag{cut-frac}\\
    &\alpha \bm{D} \geq \bm{0} \label{cutWEvalid}\\
    &\alpha x^* = \alpha_0 \label{cutWEtight}
\end{align}
where $x^f$ is a fixed fractional solution to WDP-LP, treated here as a constant, as is $\bm{D}$. (If WDP-LP admits an integral solution, we are at a WE, and so we would terminate without the need for a new cut.) While constraints (\ref{cutWEvalid}) ensure that $(\alpha,\alpha_0)$ is a valid cut for WDP, any positive scalar multiple is also valid, resulting in an unbounded objective. To avoid this unboundedness, one can begin with $\alpha_0=1$ and later scale ($\alpha, \alpha_0$) to find an integer cut, as described in Algorithm~\ref{alg:master}, shown in Appendix~\ref{appA1}.
In practice, we do not need to construct the full $D$ matrix in order to solve this problem; we provide details on its dynamic generation in Appendix~\ref{DynD}.

\subsection{Cuts for any Core Point and PME}

Example~\ref{ex:OneDotOne} demonstrated that a WE may not be a PME, motivating us to consider adding additional cuts to the WDP to get to PME. Algorithm~\ref{alg:master} (see Appendix~\ref{appA1}) can take us no further in generating new cuts once we have an AWE (or if we start at NWE), as its progress was driven by closing an IP gap that is now closed. For another, perhaps more interesting illustration, consider the single-minded bidders in the Example~\ref{ex:Five} that follows.  Here there are $c_A=17$ copies of one item, $A$. (This problem is adapted from an example on lifting to get cuts to the knapsack problem from \citealt{BW05}.)

\begin{example}
\label{ex:Five}
$\bm{b_1(5A)= 100};\quad \bm{b_2(5A)= 100};\quad \bm{b_3(5A)= 100};\quad b_4(3A)= 25;\quad b_5(8A)= 90;$
\end{example}

Here, there is clearly a fractional solution, because only 15 of the 17 items sell, and a losing bid will thus activate fractionally in WDP-LP. There is also an easy-to-infer cut, $x_1+x_2+x_3+x_4+x_5 \leq 3$, i.e., ``There can be at most 3 winning bids.'' This cut gives an AWE with $p=[0,90]$, but these fail to be a PME. (If one winner leaves, there is no way to get back to total payments of $270$.)

A second cut can be found, $x_1+x_2+x_3+2x_5 \leq 3$, reflecting, ``Bid 5 knocks out two of the three winning bids.'' Pricing this inferred fact as a second cut gets us PME prices $p=[0,25, 32.5]$. Intuitively, bid 4 sets a minimum value of 25 to be among the winners, and then an additional 65 is required from any combination of two winners so that bid 5 will remain envy-free. One may notice that these correspond to symmetric MRC payments $\rho^{MRC}_i=57.5$, with tight core constraints that say in words, ``Any two winners' payments match the combined bids 4 and 5 which total 115.'' One may also note that $\rho_i^{VCG}=25$ is not in the core for this example.

We observe in Example~\ref{ex:Five} that one may generate an AWE that is not a PME, but then cut further to get a PME.  Further, in Example~\ref{ex:OneDotOne} we observe that a natural WE can be cut to a PME.  From these observations it is natural to ask, what happens more generally if we continue to make cuts past the point of establishing WE prices? To characterize the closure of cut generation for WDP, we next show that any core point can be reached as a feasible AWE by adding at most one cut.

\begin{theorem} 
\label{cuttocore}
     For any core-selection $(\bar{\rho},x^*)$ there exists an AWE with total payments equal to $\bar{\rho}_i$ for each winning bidder $i$.
\end{theorem}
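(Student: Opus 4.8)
The plan is to place zero price on every original item and load all of $\bar\rho$ onto a single artificial item, reading the cut's coefficients directly off the target core allocation. Writing $\bar s_i := b_{i*}-\bar\rho_i$ for each winner's surplus (nonnegative because any core allocation is individually rational) and $\bar s_i := 0$ for non-winners, I would propose the single cut
\[
\alpha_k := \max\{0,\; b_k-\bar s_{i(k)}\}\quad(\forall k\in\mathcal K),\qquad \alpha_0 := \textstyle\sum_{i\in\mathcal I}\bar\rho_i,
\]
priced at $q=1$ on the artificial item, together with $p=\bm 0$ on the original items and dual surplus variables $s_i=\bar s_i$. By construction $\alpha_{k^*_i}=\bar\rho_i$ on each winning bid $k^*_i$, so the allocated payment of winner $i$ is $\rho_i = p a^{i*}+q\,\alpha_{k^*_i}=\bar\rho_i$, the quantity we must match, and the cut is tight at $x^*$ (fully demanded) since $\alpha x^*=\sum_i\bar\rho_i=\alpha_0$.

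The crux is \emph{validity}: $\alpha x\le\alpha_0$ must hold for every WDP-feasible $x\in\{0,1\}^K$. Given such an $x$, let $k_i$ be the bid it selects for each winning bidder and let $\mathcal C'$ be the sub-coalition of those $i$ with $\alpha_{k_i}>0$; then $\alpha x=\sum_{i\in\mathcal C'}(b_{k_i}-\bar s_i)$. The bids $\{k_i:i\in\mathcal C'\}$ form a feasible sub-allocation using only bidders of $\mathcal C'$, so $\sum_{i\in\mathcal C'}b_{k_i}\le w(\mathcal C',\mathcal J,\mathcal K^{\mathcal C'})$, giving $\alpha x\le w(\mathcal C',\mathcal J,\mathcal K^{\mathcal C'})-\sum_{i\in\mathcal C'}\bar s_i$. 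The right-hand side is at most $\sum_i\bar\rho_i=\alpha_0$ by exactly the core constraint (\ref{eq:coreConst}) applied to the coalition $\mathcal C'$. Thus validity of the proposed cut is a verbatim restatement of the core inequalities — this is the main content of the argument, and everything else is verification.

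It then remains to confirm the pair is a genuine AWE. Envy-freeness is immediate: for every bid, $pa^k+q\alpha_k+s_{i(k)}=\max\{0,b_k-\bar s_{i(k)}\}+\bar s_{i(k)}\ge b_k$, with equality on winning bids, so $(p,q,s)$ is feasible for the augmented WDP-dual and complementary-slack with $x^*$ (winning constraints tight, artificial item fully demanded). Because the cut is valid it removes no integer allocation, so $x^*$ remains an integer optimum of the augmented WDP-LP; by \citet{BM87} the allocation–price pair is a Walrasian equilibrium on the augmented market, i.e., an AWE, with payments $\bar\rho$ as shown. Since exactly one cut is used, this also establishes the accompanying ``at most one cut'' claim.

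I expect the only delicate point — and the main obstacle — to be the clamping $\max\{0,\cdot\}$ in the coefficients: permitting negative $\alpha_k$ would destroy the nonnegative artificial-item interpretation, while naively zeroing negative entries could inflate $\alpha x$ and threaten validity. Restricting the validity argument to the positive-coefficient sub-coalition $\mathcal C'$ is precisely what neutralizes this, since discarding the nonpositive bids strips matching terms from both $\sum b_{k_i}$ and $\sum\bar s_i$ while keeping a genuine core constraint available. The construction yields a continuous cut $(\alpha,\alpha_0)\in\mathbb R^{K+1}_{\ge0}$; if integer coefficients are desired one rescales as in Algorithm~\ref{alg:master}.
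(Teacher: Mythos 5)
Your proof is correct, but it takes a genuinely different route from the paper's. The paper proves existence non-constructively: it writes down the LP over $(p,\alpha)$ whose feasibility is the desired AWE (equalities $pa^k+\alpha_k=\bar{\rho}_{i(k)}$ on winning bids, envy-freeness on losing bids, validity $\alpha\bm{D}\geq\bm{0}$), supposes it infeasible, dualizes, and shows the dual objective is bounded above by zero using the core inequalities aggregated with multipliers $\beta_\ell$ over set-maximal allocations --- contradicting unboundedness. You instead exhibit an explicit feasible point: $p=\bm{0}$ and the single cut $\alpha_k=\max\{0,b_k-\bar{s}_{i(k)}\}$ priced at one, and your validity check makes completely transparent that ``this cut is valid'' is a term-by-term restatement of the core constraints --- the same inequalities the paper invokes, but used directly rather than inside a Farkas argument. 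Your construction is shorter and more illuminating as an existence proof (and your handling of the $\max\{0,\cdot\}$ clamp via the sub-coalition $\mathcal{C}'$ is exactly the right fix); what the paper's formulation buys in exchange is the full polyhedron of supporting $(p,\alpha)$ pairs, including those with nonzero prices on natural items, which is the object the later \ref{mp:albar} machinery optimizes over to minimize artificiality --- your solution is the maximally artificial corner of that set. Two small points worth making explicit: you use $\bar{\rho}_i\geq 0$ when concluding $\alpha_{k^*_i}=\max\{0,\bar{\rho}_i\}=\bar{\rho}_i$ (standard for core payments, but it should be stated alongside IR), and the cleanest closing step is to observe that your dual solution has objective $\sum_i\bar{s}_i+\alpha_0=bx^*$, so weak duality certifies that the augmented WDP-LP has no integrality gap before invoking \citet{BM87}.
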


Theorem \ref{cuttocore} ensures that all (personalized) core payments can be reached as a linearization via \emph{some} valid cut. 
%
In Example~\ref{ex:Four}, NWE prices were found to be PME prices, in the core by Proposition~\ref{WEcore}, but distinct from VCG = MRC payments. To arrive at the latter, an intuitive cut $x_1+x_4+x_7\leq 1$ could be added (giving a $k$ index to each bid in the order listed), linearizing the VCG = MRC outcome with $p=[0,0,0,10]$, pricing the added cut as a ``price of 10 to have a two-item bundle.'' This cut was not needed to establish a PME, motivating us to only add cuts as strong as needed for a PME, and not further to MRC, lest we invoke the unrelated goods problem. MRC payments are useful here, however, as minimal PMEs, helping establish that PMEs always exist and are obtainable through valid cuts:

\begin{theorem} \label{towardPME}
     If WE prices $p$ are not a PME, there is a core point with lower total payments. Specifically, $p\bm{A} x^{-\bar{\imath}} < p\bm{A} x^*$ implies $(\bar{\rho},x^*)$ is in the core, where $\bar{\rho}_{i}=pa^{i*}$ for $i \neq \bar{\imath}$ and $\bar{\rho}_{\bar{\imath}}=pa^{\bar{\imath}*}-\epsilon$ for some small $\epsilon>0$.
\end{theorem}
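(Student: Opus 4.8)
The plan is to start from Proposition~\ref{WEcore}, which guarantees that the unperturbed WE payment vector $\rho$ (with $\rho_i=pa^{i*}$) already satisfies every core constraint~\eqref{eq:coreConst}, and then to show that lowering \emph{only} $\rho_{\bar\imath}$ by a small amount $\epsilon$ preserves all of them. Writing $R=\sum_i\rho_i=p\bm A x^*$ and letting $s_i=b_{i*}-\rho_i\ge 0$ denote each bidder's surplus, the constraint for a coalition $\mathcal C$ reads $\sum_{i\in\mathcal C}s_i+R\ge w(\mathcal C,\mathcal J,\mathcal K^{\mathcal C})$. Passing from $\rho$ to $\bar\rho$ lowers $R$ by $\epsilon$ and raises $s_{\bar\imath}$ by $\epsilon$, so for any coalition with $\bar\imath\in\mathcal C$ the two changes cancel and the constraint is untouched; only coalitions with $\bar\imath\notin\mathcal C$ see their left-hand side fall by $\epsilon$. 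Hence it suffices to show each such coalition has \emph{strictly positive} slack, and then, there being finitely many coalitions, to take $\epsilon$ no larger than the minimum of these slacks (and small enough that $\bar\rho_{\bar\imath}\ge 0$, so $\bar\rho$ stays IR).

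The heart of the argument, and the main obstacle, is precisely this claim: that failure of the price-match property at $\bar\imath$ forces every coalition avoiding $\bar\imath$ to be slack, since a priori lowering $\rho_{\bar\imath}$ could tighten a constraint for a coalition not containing $\bar\imath$. I would prove it by contraposition, decomposing the slack. Let $x^{\mathcal C}$ attain $w(\mathcal C,\mathcal J,\mathcal K^{\mathcal C})$ with winning set $W$; summing the envy-freeness inequalities~\eqref{dual-env} over $W$ and using $p\bm A x^{\mathcal C}\le pc=R$ (feasibility together with the WE complementary slackness $p(c-\bm A x^*)=0$) and $\sum_{k\in W}s_{i(k)}\le\sum_{i\in\mathcal C}s_i$ (the $XOR$ constraint makes the winners distinct members of $\mathcal C$), the slack splits as
\[
\Big(R-p\bm A x^{\mathcal C}\Big)+\Big(\sum_{i\in\mathcal C}s_i-\sum_{k\in W}s_{i(k)}\Big)+\sum_{k\in W}\big(pa^k+s_{i(k)}-b_k\big),
\]
a sum of three nonnegative terms. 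If the slack were zero all three would vanish: vanishing of the third term gives $pa^k=b_k-s_{i(k)}\le b_k$ for each $k\in W$, making every winning bid of $x^{\mathcal C}$ individually rational at $p$ (so $b^p_k=pa^k$ in the notation of $\mathcal K(p)$), while vanishing of the first gives $p\bm A x^{\mathcal C}=R$.

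It then follows that $x^{\mathcal C}$ — which, since $\mathcal C\subseteq\mathcal I\setminus\bar\imath$, uses no bid of $\bar\imath$ — earns price-revenue $\sum_{k\in W}b^p_k=p\bm A x^{\mathcal C}=R$ in the modified instance $\mathcal K(p)$, whence $w(\mathcal I\setminus\bar\imath,\mathcal J,\mathcal K(p))\ge R$. Since deleting a bidder cannot raise the price-revenue beyond the full-economy value $w(\mathcal I,\mathcal J,\mathcal K(p))=p\bm A x^*=R$, we obtain $w(\mathcal I\setminus\bar\imath,\mathcal J,\mathcal K(p))=R$, i.e.\ the price-match identity $p\bm A x^{-\bar\imath}=p\bm A x^*$ holds at $\bar\imath$ by the characterization in Proposition~\ref{prop:noloss}. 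This contradicts the hypothesis $p\bm A x^{-\bar\imath}<p\bm A x^*$, so every coalition missing $\bar\imath$ is strictly slack. Choosing $\epsilon$ as above then makes $(\bar\rho,x^*)$ satisfy all core constraints. I expect the only delicate step to be the bookkeeping in the three-term split — confirming each term is nonnegative and pinning down what its vanishing forces — after which the reduction to Proposition~\ref{prop:noloss} is immediate.
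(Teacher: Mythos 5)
Your proof is correct and follows essentially the same route as the paper's: both arguments reduce to showing that a tight core constraint for a coalition $\mathcal{C}$ with $\bar\imath\notin\mathcal{C}$ would, via envy-freeness of the coalition's winning bids together with WE seller-optimality ($p\bm{A}x^{\mathcal C}\le pc=p\bm{A}x^*$), produce an IR allocation excluding $\bar\imath$ with price revenue $p\bm{A}x^*$, contradicting the failure of price-matching at $\bar\imath$. Your explicit three-term slack decomposition and the observation that coalitions containing $\bar\imath$ are unaffected by the $\epsilon$-perturbation are just a cleaner bookkeeping of the same case analysis the paper performs.
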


\begin{corollary}\label{MRCisPME}
    If WE prices are MRC, they are also PME prices.
\end{corollary}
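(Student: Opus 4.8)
The plan is to prove this as a direct contrapositive consequence of Theorem~\ref{towardPME} together with the defining minimality property of MRC payments. The key observation is that ``MRC'' means precisely that the total revenue $\sum_{i\in\mathcal{I}}\rho_i$ is minimal among all core points, whereas Theorem~\ref{towardPME} manufactures, out of any non-PME WE, a strictly cheaper core point. These two facts cannot coexist, which is the whole of the argument.

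First I would fix the WE pair $(x^*,p)$ under consideration and set $\rho_i=pa^{i*}$. By Proposition~\ref{WEcore} these payments lie in the core, so the hypothesis that the WE prices are MRC is meaningful: it asserts that $\sum_{i\in\mathcal{I}}\rho_i$ attains the minimum of total payments over all payment vectors satisfying the core constraints~\eqref{eq:coreConst}.

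Next I would argue by contradiction. Suppose $(x^*,p)$ were \emph{not} a PME. Then Theorem~\ref{towardPME} applies and yields an index $\bar{\imath}$ with $p\bm{A}x^{-\bar{\imath}}<p\bm{A}x^*$, together with a core point $(\bar{\rho},x^*)$ defined by $\bar{\rho}_i=\rho_i$ for $i\neq\bar{\imath}$ and $\bar{\rho}_{\bar{\imath}}=\rho_{\bar{\imath}}-\epsilon$ for some $\epsilon>0$. Summing payments gives $\sum_{i\in\mathcal{I}}\bar{\rho}_i=\sum_{i\in\mathcal{I}}\rho_i-\epsilon<\sum_{i\in\mathcal{I}}\rho_i$, so $(\bar{\rho},x^*)$ is a core point with strictly smaller total revenue than the supposed MRC payments. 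This contradicts the minimality asserted by the MRC hypothesis, forcing the conclusion that $(x^*,p)$ is in fact a PME.

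I do not expect any genuine obstacle: the entire substantive content has already been absorbed into Theorem~\ref{towardPME}, and all that remains is to recognize that its conclusion (``there is a strictly cheaper core point'') is logically incompatible with the revenue-minimizing definition of MRC. The only point requiring a moment's care is confirming that the $\epsilon$-perturbed vector $\bar{\rho}$ is genuinely a feasible core point with strictly lower objective value---both of which are guaranteed verbatim by the statement of Theorem~\ref{towardPME}---so that the comparison against the MRC minimum is legitimate.
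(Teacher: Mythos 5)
Your proof is correct and is essentially the paper's own argument: the paper proves this corollary in one sentence by noting that failure of PME would, via Theorem~\ref{towardPME}, yield a strictly cheaper core point, contradicting revenue minimality of MRC. Your write-up just spells out the same contradiction in more detail.
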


If not, Theorem \ref{towardPME} would imply a direction to reduce prices within the core, contradicting the minimum-revenue assumption. Altogether, because MRC prices always exist and Theorem \ref{cuttocore} says they can always be supported as AWE, we have:

\begin{corollary} \label{PMEexists}
     For any CA, PME prices can be generated via cuts.
\end{corollary}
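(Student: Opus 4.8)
The plan is to obtain the result by chaining three facts already in hand: the existence of minimum-revenue core-selecting (MRC) payments, Theorem~\ref{cuttocore} (every core selection is realizable as an AWE via a valid cut), and Corollary~\ref{MRCisPME} (WE prices that implement MRC payments are automatically PME). No new technical machinery should be required; the work is in verifying the first link and making the identification in the last.

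First I would establish that MRC payments exist for any CA. The defining optimization of MRC minimizes the linear objective $\sum_{i \in \mathcal{I}} \rho_i$ subject to the core constraints \eqref{eq:coreConst} (together with IR and nonnegativity), so existence reduces to showing this feasible region is nonempty. The pay-as-bid outcome $\bar\rho_i = b_{i*}$ for winners (and $\bar\rho_i = 0$ otherwise) witnesses nonemptiness: substituting it into \eqref{eq:coreConst} makes the term $\sum_{i \in \mathcal{C}}(b_{i*}-\rho_i)$ vanish and reduces $\sum_{i \in \mathcal{I}}\rho_i$ to $w(\mathcal{I},\mathcal{J},\mathcal{K})$, so each constraint becomes $w(\mathcal{I},\mathcal{J},\mathcal{K}) \geq w(\mathcal{C},\mathcal{J},\mathcal{K^{\mathcal{C}}})$, which holds because $w$ is monotone in the set of participating bidders (every allocation available to $\mathcal{C}$ is available to the grand coalition). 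The objective is bounded below by zero over this nonempty polytope, so a minimizer $(\bar\rho,x^*)$ exists; fix it as an MRC selection.

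Second, I would invoke Theorem~\ref{cuttocore} with the chosen MRC payments $\bar\rho$. This produces an AWE — a WE in the item space augmented by the generated artificial item(s) — whose total payment to each winning bidder $i$ equals $\bar\rho_i$, obtained by adding (at most) one valid cut. Finally, since the resulting WE prices implement exactly the MRC payments, Corollary~\ref{MRCisPME} applies verbatim to conclude that these prices constitute a PME. Hence PME prices exist for any CA and are produced purely by adding cuts, which is the claim.

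The main obstacle is bookkeeping rather than genuine difficulty: one must ensure that the AWE delivered by Theorem~\ref{cuttocore} is a bona fide WE to which the WE-based reasoning behind Corollary~\ref{MRCisPME} (and, underneath it, Theorem~\ref{towardPME}) legitimately applies. This is exactly where the paper's convention must be used — that a priced cut functions ``exactly as all other items'' and is concatenated as a new row of $\bm{A}$ — so that every statement proved for a WE over the original items transfers unchanged to an AWE over the augmented item set. Once that identification is granted, the three links compose and no further argument is needed.
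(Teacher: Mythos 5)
Your proposal is correct and follows essentially the same route as the paper: the paper's own justification is precisely the chain MRC exists $\Rightarrow$ Theorem~\ref{cuttocore} realizes it as an AWE $\Rightarrow$ Corollary~\ref{MRCisPME} certifies the resulting WE prices as a PME. Your explicit verification that the pay-as-bid point witnesses nonemptiness of the core polytope (via monotonicity of $w$ in the bidder set) is a detail the paper leaves implicit, but it is the standard argument and changes nothing.
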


Note that this result is limited to CAs, not extending beyond the assumptions of this paper to combinatorial exchanges (with endogenous supply) as the core and hence MRC is not guaranteed to exist in such a general setting. As an additional observation, using the \citet{AusMil02} characterization of VCG being the unique MRC point when in it is in the core, we find:

\begin{corollary} \label{VCG_PME}
    VCG payments can be linearized to PME prices if and only if VCG $=$ MRC.
\end{corollary}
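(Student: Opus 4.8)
The plan is to prove the biconditional by chaining the preceding results, handling each direction separately and leaning on the \citet{AusMil02} characterization (already invoked just above the statement) that VCG coincides with the unique MRC point exactly when it lies in the core.

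For the ``only if'' direction, I would assume the VCG payments are linearized by some PME price vector $p$, meaning $pa^{i*}=\rho^{VCG}_i$ for every winning bidder $i$ and $(x^*,p)$ is a PME. Since a PME is first of all a WE by Definition~\ref{def:PME}, Proposition~\ref{WEcore} applies and $(x^*,p)$ is core-selecting; that is, the personalized payment vector $\rho_i=pa^{i*}=\rho^{VCG}_i$ satisfies every core constraint \eqref{eq:coreConst}. Hence the VCG payments lie in the core, and the Ausubel--Milgrom characterization immediately forces $\text{VCG}=\text{MRC}$.

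For the ``if'' direction, I would start from $\text{VCG}=\text{MRC}$ and use that MRC payments always exist and are by construction a core point, here $(\rho^{MRC},x^*)=(\rho^{VCG},x^*)$. Theorem~\ref{cuttocore} then supplies an AWE whose personalized payments equal $\rho^{MRC}_i=\rho^{VCG}_i$ for each winning bidder. Finally, because these WE prices realize the MRC payments, Corollary~\ref{MRCisPME} guarantees that they are in fact PME prices, so the VCG payments have indeed been linearized to a PME.

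Both directions reduce to short deductions from results already in hand, so the only substantive external ingredient is the Ausubel--Milgrom fact, which the paper has already cited. I expect the only place requiring care to be the bookkeeping in the forward direction: one must match the statement ``VCG payments are in the core'' (a claim about the personalized vector $\rho_i=pa^{i*}$) to the exact conclusion of Proposition~\ref{WEcore}, which yields core-selection for precisely that vector, and then confirm that this core membership is the hypothesis under which the Ausubel--Milgrom characterization collapses VCG onto the unique MRC point. No genuine obstacle remains beyond verifying this alignment.
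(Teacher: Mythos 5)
Your proof is correct and follows essentially the same route as the paper, which justifies the corollary in one line by chaining Proposition~\ref{WEcore}, Theorem~\ref{cuttocore}, Corollary~\ref{MRCisPME}, and the Ausubel--Milgrom characterization of VCG as the unique MRC point exactly when it lies in the core. Your more explicit bookkeeping of the two directions fills in the same argument the paper leaves implicit.
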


Also interestingly, despite VCG's uniqueness in MRC whenever in the core, we have the following proposition, proven by the example that follows it.

\begin{proposition} \label{VCGnonU}
    The linearization of VCG in Corollary \ref{VCG_PME} is not necessarily unique.
\end{proposition}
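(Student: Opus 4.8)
The statement is established by displaying a single CA instance that admits two distinct price vectors, each of which linearizes the VCG payments as PME prices. By Corollary~\ref{VCG_PME} such a linearization exists precisely when VCG $=$ MRC, so the plan is first to pick an instance in which this equality holds, and then to exhibit at least two admissible PME price vectors that all reproduce the \emph{same} (necessarily unique) VCG payment vector while differing as item-price decompositions. This is the natural way to make the proposition's point sharp: the per-bidder payment amount is pinned down, but its breakdown into linear item prices is not.

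Concretely, I would engineer the binding competition to occur at the level of a \emph{bundle} rather than at the level of individual items, so that envy-freeness constrains only the total a winner must pay for the bundle and leaves the split of that total across the bundle's items free. The smallest such instance has two single-minded bidders competing for the same bundle $AB$ (with $c_A = c_B = 1$): bidder~$1$ bids a winning $b_1(AB)$ and bidder~$2$ a strictly lower, losing $b_2(AB)$. Here both VCG and MRC equal $b_2(AB)$, so VCG $=$ MRC and Corollary~\ref{VCG_PME} applies; moreover WDP-LP is already integral, so no cut is even required for the linearization.

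The verification then proceeds in three checks, carried out for at least two distinct price vectors $p = [p_A, p_B]$ with $p_A + p_B = b_2(AB)$, for instance $[b_2(AB),\,0]$ and $[0,\,b_2(AB)]$. First, each such $p$ supports the efficient allocation as a WE: the only envy-freeness constraint, from bidder~$2$, is exactly $p_A + p_B \ge b_2(AB)$ (tight), and complementary slackness is vacuous because both items are fully sold (no excess supply, so nonnegative prices are unconstrained). Second, each $p$ yields the identical per-bidder payment $\rho_1 = p_A + p_B = b_2(AB) = $ VCG, so every member of the family reproduces the unique VCG payment vector. Third, each $p$ is a PME, which I would confirm through Proposition~\ref{prop:noloss}: after replacing bids by their price-capped amounts to form $\mathcal{K}(p)$, removing either bidder still leaves a feasible allocation generating revenue $b_2(AB)$, so $p\bm{A}x^* = w(\mathcal{I}\setminus i, \mathcal{J}, \mathcal{K}(p))$ holds for every $i$.

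Since $[b_2(AB),\,0] \ne [0,\,b_2(AB)]$ are both valid PME linearizations of the same VCG outcome, the linearization is not unique, which proves the proposition. The main obstacle is purely in the instance design: I must ensure that no losing bid on a strict sub-bundle (such as a bid on $A$ or on $B$ alone) re-pins the individual item prices and thereby collapses the family to a single point. Bundle-level competition is exactly what preserves the one degree of freedom while keeping both the WE and PME conditions intact; richer instances (including ones that genuinely require a cut, where the freedom instead lies in splitting price between an artificial item and the natural items) exhibit the same phenomenon but are not needed for the proof.
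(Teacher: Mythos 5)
Your proof is correct, but it takes a genuinely different route from the paper's. The paper proves Proposition~\ref{VCGnonU} via Example~\ref{ex:Six}, a multi-unit instance ($c_A=9$, winners taking $8$ copies) in which the natural item is \emph{forced} to price zero by complementary slackness, so that all of the pricing is carried by artificial cut items; non-uniqueness there means two genuinely different \emph{sets of cuts} (the ``left pair'' versus the ``right pair'') each yielding a PME linearization of $\rho^{VCG}=[55,55]$. Your instance instead needs no cuts at all: two single-minded bidders on the same bundle $AB$ give VCG $=$ MRC $= b_2(AB)$, and the one-dimensional family $p_A+p_B=b_2(AB)$ of natural-item prices all satisfy envy-freeness, complementary slackness, and the PME matching condition (bidder~$2$ can absorb the full bundle at those prices after bidder~$1$ is removed), so $[b_2(AB),0]\neq[0,b_2(AB)]$ already establishes the claim. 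Your verification of WE, of the payment identity, and of PME via Proposition~\ref{prop:noloss} is sound, and your remark about excluding sub-bundle bids that would re-pin the split is the right design consideration. What each approach buys: yours is more elementary and shows the proposition is true for the most trivial of reasons (items that are never separately demanded admit an arbitrary split of a fixed bundle total), but that degeneracy is exactly the kind the paper's WDP-quad-dual tie-breaking is designed to resolve (it would select the symmetric split $p_A=p_B$). The paper's example makes a sharper point in the spirit of the surrounding discussion of elaboration and Hilbert bases: even after every natural item's price is uniquely determined, the \emph{choice of artificial items} supporting the VCG payments can still differ, which is why the paper follows the example with a discussion of which cuts its procedure would favor. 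Both examples validly prove the stated proposition.
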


\begin{example}
\label{ex:Six}
$\bm{b_1(4A)= 70};\quad \bm{b_2(4A)= 60}\quad b_3(2A)= 35;\quad b_4(2A)= 20;$ with $c_A=9$ copies of a single item $A$.
\end{example}
Again, excess supply forbids a natural WE for Example~\ref{ex:Six}, but PME prices can be obtained by either the left two cuts or the right two cuts:
\begin{align}
2x_1+2x_2+x_3+x_4 &\leq 4 &&x_1+x_2+x_3 \leq 2\nonumber\\
x_1+x_2+x_3 &\leq 2 &&x_1+x_2+x_4 \leq 2 \nonumber
\end{align}
This induces $p=[0,20,15]$ for left pair or $p=[0,35,20]$ for the right pair, both resulting in linearizations of $\rho^{VCG}=[55,55]$. The left pair can be seen to price pairs of $A$ plus ``only two of bids 1, 2, 3 can win,'' while the right pair says more symmetrically, ``only two of bids 1, 2, $k$ can win'' for $k=3,4$. Notice that the top left cut is the sum of the right two. In our elaboration of MAP prices below, we will find \emph{minimal} cuts that maintain balance among price-setting scenarios, and so we will favor the right two. Final cuts in our procedure belong to the Hilbert basis of the cone of feasible cuts, and so the top left cut could not be selected as an explanation of final payments. (The Hilbert basis is the unique set of minimal integer cuts from which any other integer cut could be generated through non-negative integer combination. See \citet{AARDAL20025}, for example.) 

The following final example was adapted from \citet{GS99} by \citet{D04}, and has a non-PME NWE and artificial PMEs with revenues strictly greater than VCG = MRC:

\begin{example}
\label{ex:Seven}
Using an OXS (OR-of-XOR-of Singletons) bid format for compactness: for each bidder, a bid may be accepted for an item from $\{A,B,C,D\}$ from either column, or both, but using at most one bid per column. For example, using the following bids, $b_1(A)=b_1(C)=8$ and $b_1(AC)=16$, but $b_1(AB)=8$ because only one of the two bid amounts in the first (or any) column can be selected. 
\begin{align}
& \text{Bidder 1} &&\text{\quad Bidder 2} && \text{\quad Bidder 3} \nonumber \\
\nonumber \begin{array}{c}
A \\
B \\
C \\
D \\
\end{array}
&\left| \begin{array}{cc}
\bm{8} & 0 \\
8 & 0 \\
0 & \bm{8} \\
0 & 8 \\
\end{array} \right|
&&\begin{array}{c}
A \\
B \\
C \\
D \\
\end{array}
\left| \begin{array}{cc}
6 & 0 \\
0 & \bm{6} \\
2 & 0 \\
0 & 0 \\
\end{array} \right|
&&\begin{array}{c}
A \\
B \\
C \\
D \\
\end{array}
\left| \begin{array}{cc}
0 & 0 \\
2 & 0 \\
0 & 6 \\
\bm{6} & 0 \\
\end{array} \right|   
\end{align}
\end{example}

One of four possible tied efficient allocations is bolded. This example was originally devised to show that WE prices may exceed VCG payments, even when the gross substitutes property holds. We strengthen this to the more recently studied \emph{strong} substitutes property, which demands gross substitutes even if items are uniquely identified.  Unique NWE prices $p=[6,6,6,6]$ generate 24 in revenue and are not a PME, since the seller could not sell all four items at $p_j=6$ when either bidder 2 or 3 is removed. Payments $\rho^{VCG}=[12,2,2]$ generate only 16 in revenue. VCG is in the core, and thus VCG = MRC. This also shows:

\begin{figure}
    \centering
    \includegraphics[width=0.6\linewidth]{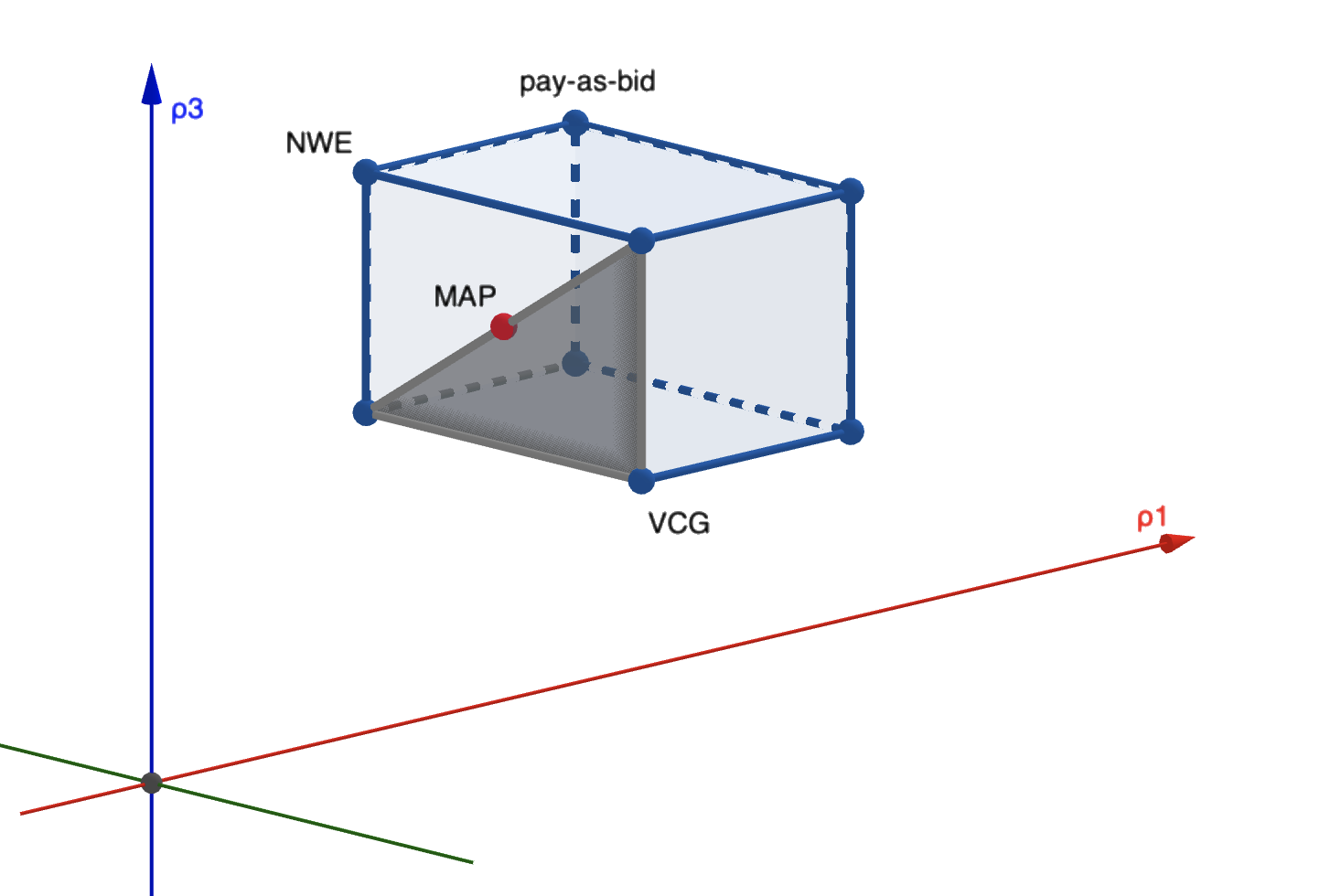}
    \caption{The core (in payment space) for Example~\ref{ex:Seven}. The pay-as-bid, NWE, and VCG  $=$ MRC points are labeled. PME points form the shaded triangle on the near face, with the unique MAP point $\bm{(12,4,4)}$ in red.}
    \Description{}
    \label{fig:GS}
\end{figure}

\begin{corollary} \label{GS}
    The strong substitutes property, while sufficient to guarantee NWE, is not sufficient to guarantee that the NWE is a PME.
\end{corollary}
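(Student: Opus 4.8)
The plan is to prove Corollary~\ref{GS} by exhibiting Example~\ref{ex:Seven} as an explicit witness. It suffices to verify three facts: (i) every bidder's valuation satisfies strong substitutes; (ii) the unique NWE has price vector $p=[6,6,6,6]$; and (iii) this NWE is not a PME. Given (i)--(iii) the corollary is immediate, since strong substitutes then holds (and is known to guarantee NWE existence, as the statement already concedes), yet the resulting NWE fails the PME property.

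For (i), I would observe that each bidder's valuation in Example~\ref{ex:Seven} is an assignment (OXS) valuation: an ``OR'' of two unit-demand (XOR-of-singletons) components, one per column. Such valuations form a well-known subclass of (gross) substitutes valuations, and \citet{GS99} already designed this instance to satisfy gross substitutes. Because every item here is unique ($c_j=1$), the strong-substitutes requirement---gross substitutes after splitting each good into its identical copies---coincides with ordinary gross substitutes, so strong substitutes holds. For (ii), I would first check directly from the demand correspondence $D_i(p)$ that $p=[6,6,6,6]$ supports a WE: each item yields surplus $0$ to its efficient winner, strictly negative surplus to every would-be switcher, and all four items clear, so each of the four tied efficient allocations lies in every winner's demand set. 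Uniqueness is the more delicate part: combining market clearing (all four items must sell) with the envy-freeness constraints~\eqref{dual-env} generated by the losing options of bidders~2 and~3 pins each price simultaneously between a lower bound (a losing bid that would otherwise be envied) and an upper bound (individual rationality of the winner), with both bounds meeting at exactly $6$.

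For (iii), I would apply Definition~\ref{def:PME} directly. Here $p\bm{A}x^*$ is the price of all four items, namely $24$, so any price-matching allocation $x^{-i}$ must sell all four items to the remaining bidders while respecting IR with respect to bids. Removing bidder~2 leaves bidders~1 and~3, and the coverage of all four items is forced: item $A$ can be taken only by bidder~1 (bidder~3 has no bid on $A$), which occupies bidder~1's first column and drives $B$ to bidder~3; bidder~3's first column is then spent on $B$, forcing $D$ onto bidder~1 (second column) and hence $C$ onto bidder~3 (second column). But bidder~3's resulting bundle $\{B,C\}$ is worth only $8<12=p_B+p_C$, violating IR. Removing bidder~3 is analogous: $D$ is forced onto bidder~1 (bidder~2 values $D$ at $0$), which blocks $C$ from bidder~1 and leaves $C$ saleable only to bidder~2 at a price $6$ exceeding its value $2$. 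In both cases no $x^{-i}$ can sell all four items IR-feasibly, so at most three items sell and the matchable revenue is at most $18<24$; thus the PME condition fails for $i\in\{2,3\}$.

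I expect the main obstacle to be the uniqueness claim in step~(ii): showing that $[6,6,6,6]$ is the \emph{only} NWE (not merely \emph{a} NWE) requires tracking which items each bidder must and must not demand across the four tied efficient allocations so that the lower and upper price bounds collapse to a single point---a small but fiddly bookkeeping exercise. By contrast, the PME-failure argument of step~(iii) is a short finite case analysis whose entire content is the single forced assignment (item $A$, respectively $D$, has exactly one IR-feasible buyer) that cascades deterministically into an IR violation. If one is willing to read the corollary as asserting merely the existence of a strong-substitutes instance whose NWE is not a PME, then step~(ii) can be weakened to exhibiting $[6,6,6,6]$ as \emph{some} NWE and showing it fails PME, sidestepping the uniqueness bookkeeping entirely.
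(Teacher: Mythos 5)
Your proposal is correct and takes essentially the same route as the paper, which establishes Corollary~\ref{GS} by exactly this witness: Example~\ref{ex:Seven} with its OXS (hence strong-substitutes) valuations, the unique NWE $p=[6,6,6,6]$, and the failure of PME because all four items cannot be sold IR-feasibly once bidder~2 or~3 is removed. The paper states this in a single sentence, so your contribution is simply to supply the forced-assignment case analysis and the uniqueness bookkeeping that the paper leaves implicit.
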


The MAP outcome for Example~\ref{ex:Seven} can be found after adding two cuts, that say in words, respectively, ``Bidders 1 and 2 cannot both get a most valued bundle,'' and ``Bidders 1 and 3 cannot both get a most valued bundle.'' These cuts yield $p=[4,4,4,4,2,2]$ (the last two prices on artificial items) with total payments of 20 and bidder 1 paying for both artificial items. These payments $\rho_2=\rho_3=4$ lie strictly between the other outcomes for bidders 2 and 3, but with $\rho_1=12$ in all three outcomes. The core of Example~\ref{ex:Seven}  is shown in Figure \ref{fig:GS}, including the unique MAP point in red.

\section{Computing MAP prices}

 Our computational strategy is to find a minimal vector $\phi$ of artificial adjustments for each bid (positive real numbers), which determines final payments. Then in a second procedure, we produce an \emph{elaboration} of these amounts into more meaningful prices, generated as a positive linear combination of minimal cut vectors. The first optimization discovers how much total artificiality \emph{must} be introduced, and the second elaboration operation finds a \emph{simplest} set of cuts that \emph{explains} the artificial amounts.

The first formulation (\ref{mp:albar}) minimizes the total amount of artificial (i.e., not linear in original auction items) adjustments needed to achieve PME:
\begin{align}
    \min\quad & \phi \bm{1} \label{mp:albar} \tag{min-$\phi$}\\
    \text{s.t.\quad} &\phi \bm{D} \geq \bm{0} \label{albar1}\\
    & p\bm{A} + s\bm{B} + \phi \geq b \label{albar2}\\
    &\bm{A}^{-i}x^{-i} \leq c \text{\quad \qquad with }=\text{ when }p_j>0&&\forall i\in \mathcal{I}^*\label{albar3}\\
    &\bm{B}^{-i}x^{-i} \leq \bm{1} &&\forall i\in \mathcal{I}^*\label{albar4}\\
    &\phi x^{-i}  = \phi x^* &&\forall i\in \mathcal{I}^*\label{albar5}\\ 
    &p a^k + \phi_k \leq b_k \text{\quad \qquad when } x_k^{-i}=1 &&\forall i,k\in \mathcal{I}^* \times \mathcal{K}\label{albar6}\\
    & \phi \in \mathbb{R}^K_{\geq 0} \label{albar7}\\
    & s \in \mathbb{R}^I_{\geq 0}\label{albar8}\\
    & p \in \mathbb{R}^J_{\geq 0}\text{\qquad \qquad \quad with }p(c-\bm{A}x^*)=0 \label{albar9}\\
    & x^{-i} \in \{0,1\}^{K-K_i} && \forall i \in \mathcal{I}^* \label{albar10}
\end{align}
 Here we present \ref{mp:albar} compactly, with nonlinear dependencies and quadratic constraints. An equivalent (but longer) linear integer program is presented in Appendix~\ref{FindALI} for completeness.

 If there exists a NWE that is a PME, standard linear item prices are sufficient, and the objective will be zero, with $\phi=\bm{0}$, and thus MAP $=$ NWE. If, on the other extreme, all original items have excess supply, linear prices cannot be used at all, and $\phi$ does all the work of pricing with its nonlinear terms, as occurs in Examples~\ref{ex:Five} and \ref{ex:Six}.

Constraints (\ref{albar1}) like (\ref{cutWEvalid}) ensure a valid cut, while (\ref{albar2}) like (\ref{dual-env}) are WE-defining envy-freeness constraints, but here with the assistance of additional nonlinear $\phi$ terms. Constraints (\ref{albar3}) and (\ref{albar4}) ensure that when any winning bidder $i$ is removed, $x^{-i}$ variables select a feasible WDP solution without $i$, one that sells any item that receives a positive price. Constraints (\ref{albar5}) ensures that (continuous) artificial item $\phi$ sells completely in each $x^{-i}$ scenario. Together with (\ref{albar6}) that ensure IR in each case, (\ref{albar3})-(\ref{albar6}) explicitly fulfill the definition of PME when each bidder is removed. Surplus variables $s$ and thus (\ref{albar8}) can be substituted out because $x^*$ is known, but we leave them in this formulation to show the connection to the simple WDP-dual. The complementary slackness condition shown in (\ref{albar9}) is easy to implement simply as $p_j=0$ if item $j$ has excess supply under $x^*$.

According to Theorem~\ref{cuttocore} and Corollary~\ref{PMEexists}, formulation~\ref{mp:albar} will always have a solution, given the assumptions of this paper (a single-seller CA normalized to zero reserve prices). If these continuous MAP payment amounts $\phi$ are not all zero, we next perform what we will call \emph{elaboration} to produce MAP item \emph{prices}. (If $\phi=\bm{0}$ no elaboration is needed.)
The elaboration proceeds using a sequence of IPs, each generating a single artificial item (cut), resulting in a set of minimal integral cuts as shown in each of the illustrative examples throughout the paper. Details of this procedure are given in Appendix~\ref{appElab}, which can be summarized as iteratively generating a minimal linear basis for the null space of the tight $\bm{D}$ constraints in formulation~\ref{mp:albar}.

For instance, consider again Example~\ref{ex:Five}. Because only 15 of the 17 copies of natural item $A$ sell, $p_A=0$ (by definition of a WE), and all pricing must be attributed to artificial items. The solution to \ref{mp:albar} is $\phi=(57.5,57.5,57.5,25,90)$, with all tight $\bm{D}$ constraints saying that the $\phi$ amounts of any two winning bids (from $\{1,2,3\})$ must match the losing bids' $\phi$ amounts (from $\{4,5\}$), for example, $\phi_1 + \phi_3\geq\phi_4+\phi_5$. Elaboration looks first for a valid cut with minimal total integer coefficients, breaking ties arbitrarily. Suppose it first finds $x_1+x_2+x_3+x_4+x_5\leq3$, or extracting just LHS coefficients, $\alpha^1=(1,1,1,1,1)$, a Hilbert basis of the cone of tight cuts. The procedure next attempts to generate $\phi$ as a positive multiple, but cannot do so without a positive remainder called $r^t$ at iteration $t$ in the Appendix~\ref{appElab} notation. So far we have $\phi=p^1_1\alpha^1+r^1=25(1,1,1,1,1)+(32.5,32.5,32.5,0,65)$. 
`1Iterating, a next Hilbert basis element gives cut $x_1+x_2+x_3+x_5\leq3$ or $\alpha^2=(1,1,1,0,2)$. The elaboration algorithm can now terminate, as these two basis elements can decompose $\phi$ with zero remainder: $\phi=p^2_1\alpha^1+p^2_2\alpha^2+r^2=25(1,1,1,1,1)+32.5(1,1,1,0,2)+\bm{0}$.


\section{Conclusions and Future Directions}

We introduce and examine two main ideas for the computation of prices in a general CA:

\begin{itemize}
    \item Non-negative integer valid cuts to WDP have natural interpretations as artificial items, and
    \item PME is a desirable property, with MAP prices as a particular selection of a ``best'' PME.
\end{itemize}

Toward the first idea, we introduce matrix $\bm{D}$ (the directions to optimality from alternate allocations), allowing us to optimize over cuts once a solution to WDP is known. All non-negative integer cuts are in the cone $\alpha\bm{D}\geq\bm{0}$ and thus admit a unique Hilbert basis, a set of irreducible ``facts'' about the relationship of the efficient WDP solution to all other feasible solutions. Pricing these cuts yields the new AWE perspective, and we characterize the closure of cutting algorithms for WDP-LP: any core point can be found as an AWE point by Theorem \ref{cuttocore}.
By Theorem \ref{towardPME} and its corollaries, PME always exist and at one extreme correspond to (i.e., provide a linear decomposition of) MRC payments, a final payment rule which has been used in practice in high-stakes spectrum auctions.

At the other extreme of the PME region, MAP prices emerge as an alternate final-payment rule for any typical CA with the following comparisons to existing payment paradigms:
\begin{enumerate}
    \item MAP prices provide linear prices while typical MRC and VCG implementations provide only lump-sum payments. (Note: MRC outcomes can always be linearly decomposed to PME with additional AWE cuts, but VCG has such a PME decomposition if and only if VCG = MRC.)    
    \item By definition, MAP uses the least amount of artificial price adjustment among PMEs, and so is ``less artificial'' than linearized MRC payments in some cases. 
    \item Ordered by total payments for a fixed set of bids: pay-as-bid $\geq$ NWE $\geq$ MAP  $\geq$ MRC $\geq$ VCG. As intermediate in total-payments, total incentives for misreporting are also intermediate under MAP.
    \item For unit-demand settings min-NWE $=$ MAP $=$ MRC $=$ VCG.
    \item Even under the strong substitutes condition, we may have NWE $>$ MAP $>$ MRC.
    \item MAP payments are additive over combining unrelated-goods markets, unlike MRC.
\end{enumerate}

As a first paper to explore AWE, PME, and MAP prices, we imagine several avenues for further related research.
Given the space limitations of the current submission, details of a larger computational study are deferred to a sequel paper, but we note that MAP requires the solution of several NP-hard WDPs, and so we should expect super-polynomial asymptotic performance unless P = NP. Still, specialized algorithms may prove useful, both for the general problem and for special cases, such as under context-specific preference restrictions.

As with MRC rules \citep[see][]{Bunz22}, a numerical evaluation of the Bayes-Nash equilibria under MAP pricing would be a complex task in itself. Progress on that difficult computational question could be complemented with behavioral lab experiments or other studies of MAP's robustness to behavioral and structural assumptions, such as relaxing quasi-linearity, risk-neutrality, adding budget constraints, etc. Along a different dimension, we have assumed a single, zero-reserve-normalized seller, which allows us to guarantee core outcomes, and thus guarantee PME existence. Results presented here generalize to a combinatorial \emph{exchange} (with multiple buyers and seller) if one is fortunate enough to have a non-empty core, but when the core is empty, PME is not guaranteed to exist. This opens the possibility of approximate MAP prices when the core is empty, but we leave that for future work.

Finally, the WE concept derives from iterative auctions, where price discovery can serve as a common-value aggregation tool. How best to embed MAP pricing into an iterative price-discovery process remains a promising future direction. Here, we introduce MAP as a sealed-bid format, which could be immediately useful on its own or as the final-payment rule (supplementary round) in a combinatorial clock auction. In future work, we plan to study dynamic MAP pricing, in which AWE cuts are generated dynamically as needed during iterative (clock) rounds.

\bigskip


\newpage
\nocite{*} 
\bibliographystyle{informs2014}
\bibliography{pricingcuts}

@article{AG2005,
  author = {Adomavicius, G. and Gupta, A.},
  year = {2005},
  title = {Toward comprehensive real-time bidder support in iterative combinatorial auctions},
  journal = {Information Systems Research},
  volume = {16},
  number = {2},
  pages = {169-185},
}

@unpublished{AusBar17,
  author = {Ausubel, L. and O. Baranov},
  title = {A practical guide to the combinatorial clock auction},
  year = {2017},
  note = {Available at ausubel.com.},
}

@incollection{ACM2006,
  author = {Ausubel, L. M. and Cramton, P. and Milgrom, P.},
  title = {The clock-proxy auction: A practical combinatorial auction design},
  year = {2006},
  pages = {120-140},
  booktitle = {Handbook of spectrum auction design},
  publisher = {Cambridge University Press},
  address = {Cambridge},
}

@article{AusMil02,
  author = {Ausubel, L. and P. Milgrom},
  year = {2002},
  title = {Ascending Auctions with Package Bidding},
  journal = {Frontiers of Theoretical Economics.},
  volume = {1},
  number = {1},
  pages={1019},
  publisher={Berkeley Electronic Press},
}

@article{baldwin2024strong,
  title={Strong substitutes: structural properties, and a new algorithm for competitive equilibrium prices},
  author={Baldwin, Elizabeth and Bichler, Martin and Fichtl, Maximilian and Klemperer, Paul},
  journal={Mathematical Programming},
  volume={203},
  number={1},
  pages={611--643},
  year={2024},
  publisher={Springer},
  address = {New York},
}

@phdthesis{B13,
  author = {Beck, M. R.},
  title = {Essays on Core-Selecting Auctions},
  year = {2013},
  school = {Stanford University},
}

@article{BichFux2018,
  author = {Bichler, M. and V. Fux and J. Goeree},
  year = {2018},
  title = {A matter of equality: Linear pricing in combinatorial exchanges},
  journal = {Information Systems Research},
  volume = {29},
  number = {4},
  pages = {1024-1043},
}

@article{BichFich21,
  author = {Bichler, M. and Fichtl, M. and Schwarz, G.},
  year = {2021},
  title = {{W}alrasian equilibria from an optimization perspective: A guide to the literature},
  journal = {Naval Research Logistics (NRL)},
  volume = {68},
  number = {4},
  pages = {496-513},
}

@article{BM87,
  author = {Bikhchandani, S. and Mamer, J. W.},
  year = {1997},
  title = {Competitive Equilibrium in an Exchange Economy with Indivisibilities},
  journal = {Journal of Economic Theory},
  pages = {385-413},
  volume = {74},
}

@article{bikhchandani2002package,
  title={The package assignment model},
  author={Bikhchandani, Sushil and Ostroy, Joseph M},
  journal={Journal of Economic theory},
  volume={107},
  number={2},
  pages={377--406},
  year={2002},
  publisher={Elsevier}
}

@book{BW05,
  author = {Bertsimas, Dimitris and Robert Weismantel},
  title = {Optimization over Integers},
  year = {2005},
  publisher = {Athena Scientific},
  address = {Nashua, NH},
}

@article{Bud2011,
  author = {Budish, E.},
  year = {2011},
  title = {The combinatorial assignment problem: Approximate competitive equilibrium from equal incomes},
  journal = {Journal of Political Economy},
  volume = {119},
  number = {6},
  pages = {1061-1103},
}

@article{Bunz22,
  author = {B\"unz, B. and Lubin, B. and Seuken, S.},
  year = {2022},
  title = {Designing core-selecting payment rules: A computational search approach},
  journal = {Information Systems Research},
  pages = {1157-1173},
  volume = {33},
}

@book{Conf14,
  author = {Conforti, M. and Cornu\'ejols G and Zambelli G},
  title = {Integer Programming},
  year = {2014},
  publisher = {Springer},
  address = {Heidelberg, New York, Dordrecht, London},
}

@phdthesis{D04,
  author = {Day, R.},
  title = {Expressing preferences with price-vector agents in combinatorial auctions},
  year = {2004},
  school = {University of Maryland},
}

@article{DC12,
  author = {Day, R. and Cramton, P.},
  year = {2012},
  title = {Quadratic core-selecting payment rules for combinatorial auctions},
  journal = {Operations Research},
  pages = {588-603},
  volume = {60},
}

@article{DM08,
  author = {Day, R. and Milgrom, P.},
  title = {Core-selecting package auctions. },
  journal = {international Journal of game Theory},
  year = {2008},
  volume = {36},
  number = {3-4},
  pages = {393-407},
}

@article{DR07,
  author = {Day, R. and Raghavan, S.},
  year = {2007},
  title = {Fair payments for efficient allocations in public sector combinatorial auctions},
  journal = {Management science},
  volume = {53},
  number = {9},
  pages = {1389-1406},
}

@article{dSV07,
  author = {de Vries, S. and Schummer, J. and Vohra, R. V.},
  year = {2007},
  title = {On ascending Vickrey auctions for heterogeneous objects},
  journal = {Journal of Economic Theory},
  volume = {132},
  number = {1},
  pages = {95-11},
}

@unpublished{ED24,
  note = {Working Paper},
  author = {Emadikhiav, M. and R. Day},
  year = {2024},
  title = {{W}alrasian Pricing for Combinatorial Markets with Compact-Bidding Languages: An Application to Truckload Transportation},
}

@article{GS99,
  author = {Gul, F. and E. Stachetti},
  year = {1999},
  title = {The English Auction with Differentiated Commodities},
  journal = {J. of Economic Theory},
  volume = {92},
  pages = {66-95},
}

@article{LahaieLubin2025,
author = {S{'e}bastien Lahaie and Benjamin Lubin},
title = {Adaptive Pricing in Combinatorial Auctions},
journal = {Management Science},
year = {2025},
volume = {71},
number = {10},
pages = {8967--8993},
}

@incollection{WD06,
  author = {Lehmann, D. and M\"uller, R. and Sandholm, T.},
  title = {The winner determination problem},
  year = {2006},
  pages = {297-318},
  booktitle = {Combinatorial Auctions},
  publisher = {{MIT} press},
  address = {Cambridge},
}

@techreport{NV22,
  author = {Nguyen, T. and Vohra, R.},
  title = {(Near) Substitute Preferences and Equilibria with Indivisibilities},
  year = {2022},
  number = {22-010},
  institution = {Penn Institute for Economic Research, Department of Economics, University of Pennsylvania},
}

@inproceedings{ParkesKalagnanamEso2001,
  author = {Parkes, David and Kalagnanam, Jayant and Eso, Marta},
  title = {Achieving Budget-Balance with Vickrey-Based Payment Schemes in Exchanges},
  booktitle = {Proceedings of the 17th International Joint Conference on Artificial Intelligence (IJCAI-01)},
  year = {2001},
  month = {01},
  pages = {1161-1168},
  publisher = {IJCAI},
  address = {California}
}

@inproceedings{milgrom2022linear,
  title={Linear Pricing Mechanisms for Markets without Convexity},
  author={Milgrom, Paul and Watt, Mitchell},
  booktitle={Proceedings of the 23rd ACM Conference on Economics and Computation},
  pages={300--300},
  year={2022},
  publisher = {Association for Computing Machinery},
  address = {New York},
}

@article{oNeill2005efficient,
  title={Efficient market-clearing prices in markets with nonconvexities},
  author={O'Neill, Richard P and Sotkiewicz, Paul M and Hobbs, Benjamin F and Rothkopf, Michael H and Stewart Jr, William R},
  journal={European journal of operational research},
  volume={164},
  number={1},
  pages={269--285},
  year={2005},
  publisher={Elsevier}
}

@article{ALON198691,
title = {Regular hypergraphs, Gordon's lemma, Steinitz' lemma and invariant theory},
journal = {Journal of Combinatorial Theory, Series A},
volume = {43},
number = {1},
pages = {91-97},
year = {1986},
issn = {0097-3165},
doi = {https://doi.org/10.1016/0097-3165(86)90026-9},
url = {https://www.sciencedirect.com/science/article/pii/0097316586900269},
author = {N Alon and K.A Berman}
}

@article{le2024equivariant,
  title={Equivariant lattice bases},
  author={Le, Dinh and R{\"o}mer, Tim},
  journal={Transactions of the American Mathematical Society},
  volume={377},
  number={07},
  pages={4865--4893},
  year={2024}
}

@article{leonard1983elicitation,
  title={Elicitation of honest preferences for the assignment of individuals to positions},
  author={Leonard, Herman B},
  journal={Journal of political Economy},
  volume={91},
  number={3},
  pages={461--479},
  year={1983},
  publisher={The University of Chicago Press}
}

@article{AARDAL20025,
title = {Non-standard approaches to integer programming},
journal = {Discrete Applied Mathematics},
volume = {123},
number = {1},
pages = {5-74},
year = {2002},
issn = {0166-218X},
doi = {https://doi.org/10.1016/S0166-218X(01)00337-7},
url = {https://www.sciencedirect.com/science/article/pii/S0166218X01003377},
author = {Karen Aardal and Robert Weismantel and Laurence A. Wolsey},
keywords = {Integer programming, Lattice basis reduction, Lenstra's algorithm, Test sets, Augmentation algorithms, Test sets, Asymptotic group problem, Subadditivity, Corner polyhedron},
abstract = {In this survey we address three of the principal algebraic approaches to integer programming. After introducing lattices and basis reduction, we first survey their use in integer programming, presenting among others Lenstra's algorithm that is polynomial in fixed dimension, and the solution of diophanine equations using basis reduction. The second topic concerns augmentation algorithms and test sets, including the role played by Hilbert and Gröbner bases in the development of a primal approach to solve a family of problems for all right-hand sides. Thirdly we survey the group approach of Gomory, showing the importance of subadditivity in integer programming and the generation of valid inequalities, as well the relation to the parametric problem cited above of solving for all right-hand sides.}
}


\newpage
\section*{Appendices}
\begin{APPENDICES}

\section{Proofs of Theorems} \label{appProofs}

     Theorem \ref{cuttocore}.
     \textit{For any core-selection $(\bar{\rho},x^*)$ there exists an AWE with total payments equal to $\bar{\rho}_i$ for each winning bidder $i$.}
\medskip

\begin{proof}
     For a proof by contradiction, suppose that the following LP has no solution:  
    \begin{align}
    \min\quad & \phi\bm{0} \nonumber\\
    \text{s.t.\quad} & p a^k + \phi_k = \bar{\rho}_{i(k)} && \forall k \in \mathcal{K}^* \nonumber\\
    &  p a^k + \phi_k \geq b^k - b ^{i(k)*} + \bar{\rho}_{i(k)} 
    && \forall k \in \mathcal{K} \setminus \mathcal{K}^* \nonumber\\
    &\phi \bm{D} \geq \bm{0} \nonumber\\
    & p \in \mathbb{R}^J_{\geq 0}\text{ ,}\quad \phi \in \mathbb{R}^K_{\geq 0} \nonumber
    \end{align}
    By construction, the existence of a solution would be an adjusted set of prices $p$ that equal core payments $\bar{\rho}_i$ for each $i$ while maintaining WE, and are made by a valid cut $(\phi,\phi x^*)$, according to the three sets of linear constraints, respectively. When this formulation is infeasible, its dual must be unbounded, noting that it is feasible when all variables are zero. This dual can be written:
    \begin{align}
    \max\quad & \sum_{k \in \mathcal{K}^*} \bar{\rho}_{i(k)} y_k + \sum_{k \in \mathcal{K} \setminus \mathcal{K}^*} (b^k - b ^{i(k)*} + \bar{\rho}_{i(k)}) z_k
    \nonumber\\
    \text{s.t.\quad} & \sum_{k \in \mathcal{K}^*} a^k_j y_k + \sum_{k \in \mathcal{K} \setminus \mathcal{K}^*} a^k_j z_k \leq 0 && \forall j \in \mathcal{J}\label{T1a}\\
    & \sum_{\ell \in \mathcal{L}} \beta_\ell (1-x^{\ell}_k) \leq -y_k && \forall k \in \mathcal{K}^*\label{T1b}\\
    & z_k \leq \sum_{\ell \in \mathcal{L}} \beta_\ell x^{\ell}_k && \forall k \in \mathcal{K} \setminus \mathcal{K}^*\label{T1c}\\
    & y \in \mathbb{R}^{|\mathcal{K}^*|}_{\leq 0}\text{ ,}\quad z \in \mathbb{R}^{|\mathcal{K} \setminus \mathcal{K}^*|}_{\geq 0}\text{ ,}\quad \beta \in \mathbb{R}^{|\mathcal{L}|}_{\geq 0} \nonumber
    \end{align}
    with $x^{\ell}_k$ terms constant in this formulation from the $\bm{D}$ matrix. Note that $y$ variables are non-positive as a consequence of \eqref{T1b}. Continuous $y$ variables select negative amounts of winning bids, while $z$ variables select positive amounts of non-winning bids. Any increase of a $z$ variable must be offset by at least as much change among $y$ variables in the opposite direction, on an item-by-item basis per constraints \eqref{T1a}. Associated with each $\ell \in \mathcal{L}$ is a maximal feasible solution $x^\ell$; let $\mathcal{K}^\ell=\{k\in\mathcal{K}|x^{\ell}_k=1\}$. Because $\bar{\rho}$ is core-selecting:
    \[
    \sum_{k \in \mathcal{K}^*\setminus \mathcal{K}^\ell} \bar{\rho}_{i(k)} \geq \sum_{k \in \mathcal{K}^\ell \setminus \mathcal{K}^*} (b^k - b ^{i(k)*} + \bar{\rho}_{i(k)})
    \]
    Otherwise, $\ell$ would provide a blocking coalition at $\bar{\rho}$. Multiplying each of these by $\beta_\ell$ and summing over $\mathcal{L}$ produces:
    \[
    \sum_{k \in \mathcal{K}^*} \bar{\rho}_{i(k)} (\sum_{\ell \in \mathcal{L}} \beta_\ell (1-x^{\ell}_k))\geq \sum_{k \in \mathcal{K} \setminus \mathcal{K}^*} (b^k - b ^{i(k)*} + \bar{\rho}_{i(k)}) (\sum_{\ell \in \mathcal{L}} \beta_\ell x^{\ell}_k)  
    \]
    Together with (\ref{T1b}) and (\ref{T1c}), we have:
    \[
    -\sum_{k \in \mathcal{K}^*} \bar{\rho}_{i(k)} y_k \geq \sum_{k \in \mathcal{K} \setminus \mathcal{K}^*} (b^k - b ^{i(k)*} + \bar{\rho}_{i(k)}) z_k
    \]
    But this implies that the maximization objective is always non-positive, contradicting its supposed unboundedness.
\end{proof} 

\bigskip
Theorem~\ref{towardPME}. 
     \textit{If WE prices $p$ are not a PME, there is a core point with lower total payments. Specifically, $p\bm{A} x^{-\bar{\imath}} < p\bm{A} x^*$ implies $(\bar{\rho},x^*)$ is in the core, where $\bar{\rho}_{i}=pa^{i*}$ for $i \neq \bar{\imath}$ and $\bar{\rho}_{\bar{\imath}}=pa^{\bar{\imath}*}-\epsilon$ for some small $\epsilon>0$.}

\medskip
\begin{proof}
    Suppose that for some $\bar{\imath}$ the specified $\bar{\rho}$ is not in the core for any $\epsilon$. Let $\rho$ and $s$ be payments and surplus under $p$.  With $\rho$ in the core by Proposition \ref{WEcore} but $\bar{\rho}$ not in the core after an $\epsilon$ reduction of bidder $\bar{\imath}$'s payment, it must be the case that there is some tight core constraint at $\rho$, indexed by $\mathcal{C}$ with $\bar{\imath} \notin \mathcal{C}$:
    \[
    \sum_{i \in \mathcal{I}^* \setminus \mathcal{C}} \rho_i = w(\mathcal{C}, \mathcal{J}, \mathcal{K^{\mathcal{C}}}) - \sum_{i \in \mathcal{C}} b_{i*}
    \]
    Let each winning bid amount in $w(\mathcal{C}, \mathcal{J}, \mathcal{K^{\mathcal{C}}})$ be given by $b^{\mathcal{C}}_i$ for bundle $a^{\mathcal{C},i}$ for each $i\in\mathcal{C}$, with selection vector $x^{\mathcal{C}}$. Substitute these bid amounts $b^{\mathcal{C}}_i$ into the previous equation for $w(\mathcal{C}, \mathcal{J}, \mathcal{K^{\mathcal{C}}})$ and then substitute $b_{i*}=s_i+\rho_i$, moving $\rho_i$ terms to the LHS to get:
    \[
    \sum_{i \in \mathcal{I}^*}\rho_i=\sum_{i\in\mathcal{C}} (b^{\mathcal{C}}_i - s_i)
    \]
    whose LHS is $p\bm{A}x^*=$ by definition. Next, because $p$ supports a WE, we have $pa^{\mathcal{C},i} \geq b^{\mathcal{C}}_i - s_i$ for each $i \in \mathcal{C}$, summing to yield:
    \[
    p\bm{A}x^* \leq p\bm{A}x^{\mathcal{C}}
    \]
    If we have strict $pa^{\mathcal{C},i} > b^{\mathcal{C}}_i - s_i$ for any $i\in\mathcal{C}$, then this last statement is also strict, contradicting $(x^*,p)$ as a WE. (An efficient $x^*$ must be in the seller's preferred supply set at WE, therefore an alternative $x^{\mathcal{C}}$ with $p\bm{A}x^* < p\bm{A}x^{\mathcal{C}}$ is a contradiction.) If instead none of those envy-freeness constraints are strict, we have each $s_i=b^{\mathcal{C}}_i - pa^{\mathcal{C},i} \geq 0$, and thus $x^{\mathcal{C}}$ is an IR allocation among $\mathcal{I}\setminus \bar{\imath}$ that generates as much price revenue as $x^*$, contradicting the assumption that prices $p$ are not a PME, as specifically witnessed by $\bar{\imath}$.
\end{proof}

\section{Dynamic Generation of \textbf{\textit{D}} } \label{DynD}
Recall from \S~\ref{cutsforWE} that each column of $\bm{D}$ is given by $x^* - x^\ell$ for some alternate set-maximal solution $x^\ell$ of WDP.
For small auctions, such as the numbered illustrative examples of this paper, enumerating all such solutions to WDP is easy (not too expensive computationally). As the size of the auction grows, however, the number of solutions grows exponentially, and it becomes practical to only generate columns of the matrix $\bm{D}$ as needed to guarantee feasibility of an optimization that uses $\bm{D}$. As is standard in constraint generation, this may be accomplished by solving the \emph{separation problem} associated with the region defined by $\bm{D}$. Given a partially generated matrix $\bm{D}^t$ and a potential cut to WDP given by $\alpha x \leq \alpha x^*$ (interpreted here as $\alpha_0 = \alpha x^*$ copies of an artificial item) that satisfies $\alpha \bm{D}^t \geq \bm{0}$, we have the following:

\begin{proposition}
    \textit{Separation at} $\bm{D}^t$: Find solution $x^{\ell +}$ of WDP with bid amounts $b$ replaced by $\alpha$ values. If $\alpha x^{\ell +} \leq \alpha x^*$, then $\alpha x \leq \alpha x^*$ is a valid cut to WDP. Otherwise, append column $x^*-x^{\ell +}$ to $\bm{D}^t$ to form $\bm{D}^{t+1}$ and generate a new potential $\alpha$.
\end{proposition}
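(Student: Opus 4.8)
The plan is to recognize this as a standard separation/constraint-generation step and reduce the claim to an elementary optimization identity. The central observation is that solving WDP with the objective vector $\alpha$ in place of $b$ computes exactly $M := \max\{\alpha x : x \in \{0,1\}^K,\ \bm{A}x \leq c,\ \bm{B}x \leq \bm{1}\} = \alpha x^{\ell+}$, the largest value the left-hand side of the candidate inequality $\alpha x \leq \alpha_0$ can attain over all WDP-feasible integer points. By definition, $\alpha x \leq \alpha_0 = \alpha x^*$ is a valid cut for WDP if and only if it holds at every such point, i.e., if and only if $M \leq \alpha x^*$. This is precisely the stated test $\alpha x^{\ell+} \leq \alpha x^*$, which establishes the first conclusion immediately.

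For the second conclusion (the separation step), I would argue the contrapositive. If $\alpha x^{\ell+} > \alpha x^*$, then $x^{\ell+}$ is a WDP-feasible integer point violating the candidate inequality, so it cannot be a valid cut; moreover the new column $d^+ := x^* - x^{\ell+}$ satisfies $\alpha d^+ = \alpha x^* - \alpha x^{\ell+} < 0$. Hence appending $d^+$ to $\bm{D}^t$ renders the current $\alpha$ infeasible for the enlarged system $\alpha \bm{D}^{t+1} \geq \bm{0}$, so any $\alpha$ generated thereafter must respect the newly discovered direction to optimality. This is the usual cutting logic: the separation oracle returns a most-violated direction, and the master problem over $\bm{D}$ is re-solved against it.

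The one detail I would treat most carefully is the match between $x^{\ell+}$ and the definition of the columns of $\bm{D}$, which are built from \emph{set-maximal} feasible solutions. Because $\alpha \geq \bm{0}$, the objective $\alpha x$ is monotone in $x$, so any optimal $x^{\ell+}$ extends greedily to a set-maximal $\hat{x} \geq x^{\ell+}$ with $\alpha \hat{x} = \alpha x^{\ell+}$: adding further feasible bids cannot decrease the objective, and optimality forbids a strict increase. Replacing $x^{\ell+}$ by $\hat{x}$ leaves both the test value and the sign of $\alpha d^+$ unchanged while guaranteeing that the appended column $x^* - \hat{x}$ is a legitimate member of the $\bm{D}$ family. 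The same monotonicity shows that restricting attention to set-maximal solutions loses nothing when checking validity, since $\max_x \alpha x$ is attained at a set-maximal point.

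The argument itself is short; beyond the set-maximality bookkeeping above, the only further thing I would state explicitly — though it lies just outside the single separation step the proposition describes — is termination of the outer loop. There are finitely many set-maximal solutions, each contributing at most one distinct column $x^* - x^\ell$, so the generate-and-separate process must halt. Upon halting, the condition $\alpha \bm{D}^t \geq \bm{0}$ certifies $\alpha x^{\ell+} \leq \alpha x^*$ against every relevant direction, and hence validity of the cut over all of WDP.
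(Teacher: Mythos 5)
Your proof is correct and follows essentially the same route as the paper's: the WDP solve with objective $\alpha$ acts as a separation oracle whose optimal value certifies validity of the candidate cut, and an appended violated column forces every subsequently generated $\alpha$ to satisfy $\alpha x^{\ell+} \leq \alpha x^*$, so the same solution cannot recur. Your extra bookkeeping---extending $x^{\ell+}$ to a set-maximal solution (harmless since $\alpha \geq \bm{0}$ makes the objective monotone) and noting finite termination---goes slightly beyond the paper's terser argument but does not change the approach.
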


\begin{proof}
    The cut $\alpha x \leq \alpha x^*$ is invalid if and only if there is some binary solution $x^\ell$ to WDP with $\alpha x^\ell > \alpha x^*$. The maximality of $x^{\ell +}$ with objective $\alpha$ establishes the first part. For the second part, notice that the appended column yields $\alpha(x^*-x^{\ell +})\geq 0$, or $\alpha x^{\ell +} \leq \alpha x^*$, so that the same $x^{\ell +}$ will not appear again in a dynamic population of $\bm{D}$. Note that this process does not require integer $\alpha$ values, and thus can be used in the process of finding the aggregate continuous cut $\phi$. 
\end{proof}

\medskip
 In practice, commercial IP solvers like CPLEX and Gurobi offer a solution pool feature, listing high-quality feasible integer solutions found during its internal branching. Feasible solutions thus generated may offer a warm-start to the generation of $\bm{D}$. Similarly, WDP solutions found in the course of computing VCG payments or MRC payments can also help to pre-populate $\bm{D}$.

\section{Heuristic for AWE }\label{appA1}
 The focus of the current paper is on PME outcomes, but for completeness we provide a simple iterative process for computing AWE prices (that may not be a PME). It is clear from the related literature that there are often many ways to find cuts to a non-integral IP formulation, so we provide Algorithm~\ref{alg:master} as just one direct cutting algorithm for completeness, leaving the development of faster or more specialized techniques for future research, should one be interested in AWEs that may not be a PME.
 
\begin{algorithm}
    \caption{Heuristic to Find AWE Prices} 
    \label{alg:master} 
    \begin{algorithmic}[0]
    \State Solve WDP and WDP-LP
    \State \textbf{while} $w^f(\mathcal{I}, \mathcal{J}, \mathcal{K})>w(\mathcal{I}, \mathcal{J}, \mathcal{K})$ \textbf{do}
        \State \quad \quad Solve (\ref{cutWE-obj}) with $\phi \in \mathbb{R}^K_{\geq 0}$ and fixed $\phi_0=1$
        \State \quad \quad Find the smallest $\gamma \geq 1$ such that $\forall k \in \mathcal{K}$, $\gamma\phi_k\in \mathbb{Z}$
        \State \quad \quad Update $(\alpha, \alpha_0) \leftarrow (\gamma\phi, \gamma\phi_0)$
        \State \quad \quad Add cut $(\alpha, \alpha_0)$ to all formulations
        \State \quad \quad Re-solve WDP-LP, updating $w^f(\mathcal{I}, \mathcal{J}, \mathcal{K})$ and $x^f$
    \State Solve WDP-quad-dual (including all cuts as new artificial items)
\end{algorithmic} 
\end{algorithm}

\begin{proposition} 
\label{WEalg}
Algorithm 1 converges to AWE prices for any CA instance.
\end{proposition}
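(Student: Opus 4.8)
The plan is to prove two things: (i) the \textbf{while} loop terminates after finitely many iterations, leaving $w^f(\mathcal{I},\mathcal{J},\mathcal{K})=w(\mathcal{I},\mathcal{J},\mathcal{K})$; and (ii) the prices returned by the concluding solve of WDP-quad-dual then constitute an AWE. The second part is short given the first, so the bulk of the work is termination.

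First I would show that whenever the loop body runs (i.e.\ a gap $w^f>w$ remains), the LP (\ref{cutWE-obj}) returns a valid cut that separates the current fractional optimum $x^f$. The key is that $\bar\alpha=b/w$ is already feasible: it is nonnegative, it meets the tightness constraint (\ref{cutWEtight}) since $bx^*=w$ forces $\bar\alpha x^*=1$, and it meets validity (\ref{cutWEvalid}) because $x^*$ is WDP-optimal, so $b(x^*-x^\ell)\ge 0$ for every maximal feasible $x^\ell$. Its objective value is $bx^f/w=w^f/w>1$, so the optimal $\bar\alpha$ returned (with $\bar\alpha_0=1$) satisfies $\bar\alpha x^f>1=\bar\alpha x^*$: it is violated by $x^f$ yet valid for all integer-feasible points. (Note $w>0$ under a gap, since any single bid is feasible, so $w=0$ would force every $b_k=0$ and hence $w^f=0$, a contradiction.) Rescaling by the least common denominator $\gamma$ preserves both validity and the separation of $x^f$, and $\gamma$ is finite because (\ref{cutWE-obj}) has rational data and thus a rational vertex optimum.

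The main obstacle is finite termination, and the observation that unlocks it is that the feasible region of (\ref{cutWE-obj}), namely $\mathcal{A}=\{\alpha\ge 0:\alpha\bm{D}\ge 0,\ \alpha x^*=1\}$, is the \emph{same polytope in every iteration}. A valid cut removes no integer-feasible point, so the integer-feasible set, its set-wise maximal elements $\{x^\ell\}$, the direction matrix $\bm{D}$, and the optimum $x^*$ are all unchanged by the cuts added; only the objective vector $x^f$ varies across iterations. Moreover $\mathcal{A}$ is bounded: every bid $k$ lies in some maximal solution $x^\ell\ge e_k$, whence $\alpha_k\le \alpha x^\ell\le \alpha x^*=1$. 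Thus (\ref{cutWE-obj}) is optimized over a fixed polytope with finitely many vertices, and I would take $\bar\alpha^t$ to be a vertex optimum at iteration $t$.

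Finally I would argue that each iteration consumes a \emph{fresh} vertex of $\mathcal{A}$. After the scaled cut $\gamma_s\bar\alpha^s x\le \gamma_s$ is added at iteration $s$, every later LP optimum $x^f$ respects it, so $\bar\alpha^s x^f\le 1$; hence in any subsequent iteration each previously used vertex evaluates to at most $1$ under the current objective, while the optimal value is strictly greater than $1$ by the feasibility argument above. The optimal vertex therefore differs from all earlier ones, so the loop runs at most $|\mathrm{vert}(\mathcal{A})|$ times and halts with $w^f=w$. At that point WDP-LP augmented with the cut rows as artificial items has an integral optimum (attained by $x^*$), so by the main result of \citet*{BM87} any optimal dual $p$ together with $x^*$ is a Walrasian equilibrium on the augmented item set; the tightness constraints (\ref{cutWEtight}) ensure every artificial item is fully demanded, so no priced item carries excess supply. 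The final WDP-quad-dual solve simply selects one such dual, delivering AWE prices.
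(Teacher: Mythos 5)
Your proof is correct, and for the crucial termination step it takes a genuinely different and more rigorous route than the paper. The paper's argument is a one-line sketch: (\ref{cutWE-obj}) returns a maximally violated cut, so ``progress must be made at each iteration, lowering the WDP-LP objective value, leading to eventual convergence.'' Taken literally that is not airtight --- a strictly decreasing LP value need not reach $w$ in finitely many steps (and if WDP-LP has alternative optima, cutting off one $x^f$ need not even lower the value). You sidestep this by observing that the feasible region $\mathcal{A}=\{\alpha\ge 0:\alpha\bm{D}\ge 0,\ \alpha x^*=1\}$ of (\ref{cutWE-obj}) is a \emph{fixed} polytope across iterations (valid cuts do not change the integer-feasible set, hence not $\bm{D}$ or $x^*$), that it is bounded (each $\alpha_k\le\alpha x^\ell\le\alpha x^*=1$ via a maximal solution containing bid $k$), and that each iteration's vertex optimum is distinct from all earlier ones because earlier cuts force $\bar\alpha^s x^f\le 1$ while feasibility of $b/w$ forces the current optimum above $1$ whenever a gap remains. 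This yields an explicit finite bound of $|\mathrm{vert}(\mathcal{A})|$ iterations, at the modest cost of stipulating that the solver returns vertex optima. Your explicit witness $\bar\alpha=b/w$ (nonnegative, tight at $x^*$, valid by optimality of $x^*$, with objective $w^f/w>1$) is also a nice addition the paper leaves implicit: it shows the separation LP always makes strict progress while the gap persists. The concluding appeal to \citet{BM87} once $w^f=w$ matches the paper's intent exactly.
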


As a sketch of proof, note that a maximally violated cut maximizes $\alpha x^f - \alpha_0$, so that (\ref{cutWE-obj}) finds a maximally violated cut at each iteration, given a fixed $\alpha_0$. Progress therefore must be made at each cut iteration, lowering the WDP-LP objective value, leading to eventual convergence. The scaling integer $\gamma$ is the least common multiple of denominators of reduced rational $\alpha_k$ values, updating them to integer values for a more interpretable artificial item, motivated by our previous examples.

Solved as an LP (i.e., with no integrality constraints within the while loop), Algorithm~\ref{alg:master}'s cut generation appears easy; the difficulty is in generating the large matrix $\bm{D}$, requiring the enumeration of solutions to WDP, which is NP-Hard in general. For small examples, this is not problematic, but as the auction becomes larger, columns of $\bm{D}$ will need to be dynamically generated to manage the size, as in Appendix~\ref{DynD}.

\section{A Linear Mixed-Integer Reformulation of (Min-\texorpdfstring{$\phi$}{a})} \label{FindALI}
To compute MAP prices solve:
\begin{align}
    \min\quad & \phi \bm{1} \label{FindALIL} \tag{Min-$\phi$-L}\\
    \text{s.t.\quad} &\phi \bm{D} \geq \bm{0} \nonumber\\
    & p \Tilde{\bm{A}} +\phi - \phi^* \geq \Tilde{b}\label{FindALI1}\\
    &p \leq M z^T \label{FindALI2}\\
    &\bm{A}^{-i}x^{-i} \leq c &&\forall i\in \mathcal{I}^*\nonumber\\
    &\bm{A}^{-i}x^{-i} \geq c - M(\bm{1}-z)&&\forall i\in \mathcal{I}^*\label{FindALI3}\\
    &\bm{B}^{-i}x^{-i} \leq \bm{1} &&\forall i\in \mathcal{I}^*\nonumber\\
    &\xi^{-i}\bm{1} = \phi x^* &&\forall i\in \mathcal{I}^*\label{FindALI4}\\
    &\xi^{-i} \leq M x^{-i} && \forall i \in \mathcal{I}^*\label{FindALI5}\\
    &\xi^{-i} \leq \phi && \forall i \in \mathcal{I}^*\label{FindALI6}\\
    &\xi^{-i} \geq \phi - M (\bm{1} - x^{-i})&& \forall i \in \mathcal{I}^*\label{FindALI7}\\
    &p \bm{A}^{-i} + \phi \leq b + M(\bm{1} - x^{-i})&&\forall i\in \mathcal{I}^*\label{FindALI8}\\
    & \phi \in \mathbb{R}^K_{\geq 0} \nonumber\\
    & p \in \mathbb{R}^J_{\geq 0}\nonumber\\
    & \xi^{-i} \in \mathbb{R}^K_{\geq 0} && \forall i \in \mathcal{I}^*\nonumber\\
    & x^{-i} \in \{0,1\}^{K-K_i} && \forall i \in \mathcal{I}^*\nonumber\\
    & z \in \{0,1\}^{J} \nonumber
\end{align}
\noindent where $M$ is a sufficiently large positive number. Here, (\ref{FindALI1}) substitutes out surplus variables $s$, resulting in $\Tilde{\bm{A}}$ with each entry given by $\Tilde{\bm{A}}_{j,k}= a^k_j - a^{i(k)*}_j$ , each $\Tilde{b}_k=b_k-b_{i(k)*}$ , and similarly $\phi^*_k = \phi_{i(k)*}$ , in order to treat the cut item the same as a natural items in the substitution. Constraints (\ref{FindALI2}) introduce auxiliary binary variables $z$ that must equal 1 when ever a corresponding price $p$ is positive. Then in each sub-economy $\mathcal{I}^*\setminus i$ , a positively priced item must not have excess supply according to (\ref{FindALI3}).

Next, continuous $\xi^{-i}_k$ variables replace what had been quadratic terms $\phi_k x^{-i}_k$ in (Find $\phi$). Constraints (\ref{FindALI4}) through (\ref{FindALI7}) respectively enforce that: the fictional item sells fully in each sub-economy; that $\xi^{-i}_k$ can only be positive if the bid $k$ is selected in the sub-economy; and that $\xi^{-i}_k=\phi_k$ exactly when $x^{-i}_k=1$. Finally, (\ref{FindALI8}) enforces that any bid selected in any sub-economy is willing to cover the price $pa^k$ with respect to IR (and is loose for any bid not selected).

\section{Elaboration into Interpretable Cuts} \label{appElab}
To do describe the decomposition of a single cut $\phi$, into a set of interpretable integral cuts, we first introduce some additional nomenclature.
Based on the solution of formulation~\ref{mp:albar}, let $\bar{\bm{D}}$ be the tight constraints from (\ref{albar1}), i.e., those columns of $\bm{D}$ such that $\phi\bar{\bm{D}}=\bm{0}$. Call these columns $\mathcal{\bar{L}}$ with $\bar{L}=|\mathcal{\bar{L}}|$. Further, replace any row $k$ of $\bar{\bm{D}}$ with a zero row  wherever $\phi_k=0$. The columns of $\bar{\bm{D}}$ correspond to a set of scenarios in which there is a valid set-maximal allocation including at least one losing bid that exactly matches the total $\phi$ amounts of the winning bids. Thus, $\bar{\bm{D}}$ entries are given by:

  \begin{equation}\nonumber
    \bar{\bm{D}}_{k,\ell} =
    \begin{cases}
        \quad \text{ }1 & \text{if } x^{\ell}_k=0 \text{ , } x^*_k=1 \text{ , and } \phi_k >0\\
        \quad -1& \text{if } x^{\ell}_k=1 \text{ , } x^*_k=0 \text{ , and } \phi_k >0 \\
        \quad \text{ }0 & \text{otherwise} 
    \end{cases}
  \end{equation}

That is, winning bids not in scenario $\ell$ get a positive 1 while non-winning bids in $\ell$ get -1, and the corresponding $\phi$ amounts of these two groups exactly balance for each $\ell$ column. (Winning bids that are in scenario $\ell$ effectively cancel out to a zero entry.) Note that the scenarios captured in the columns of $\bar{\bm{D}}$ are payment-setting to achieve PME by definition. For example, in Example \ref{ex:Five}, where ``any two winners' payments match the combined bids 4 and 5 which total 115,'' the columns in $\bar{\bm{D}}$ correspond to the three (symmetric) scenarios in which one winner could join bids 4 and 5 as matching the total payments of all winners.

Under $\phi\neq\bm{0}$, at least one equation from (\ref{albar5}) is non-zero, and so $\bar{\bm{D}}$ is not empty. From $\phi\bar{\bm{D}}=\bm{0}$ with nonzero $\phi$, $ker(\bar{\bm{D}}^T)=\{\alpha|\alpha \bar{\bm{D}}=\bm{0} \}$ is nontrivial, i.e., $nullity(\bar{\bm{D}}^T)=dim(ker(\bar{\bm{D}}^T))\geq1$, from elementary linear algebra. This $nullity$ will be the dimension of our cut space (i.e., the number of cuts produced).

\begin{definition}
    The set of \emph{critical bids} is $\bar{\mathcal{K}}=\{k\in\mathcal{K}|\exists \ell\text{ with } \bar{\bm{D}}_{k,\ell} \neq 0\}$.
\end{definition}

Critical bids are those appearing in at least one exact payment-matching scenario, and so must (critically) not see a total payment change during elaboration.
Again, assuming an elaboration is needed, this set will be non-empty. The elaboration procedure will generate prices that maintain exactly minimal adjustments $\phi_k$ for all $\bar{\mathcal{K}}$ bids. Notice that $\{k|k\in\mathcal{K}^*\text{ and }\phi_k>0 \} \subset \bar{\mathcal{K}}$ by (\ref{albar5}). 

Note that not all bids $k$ with $\phi_k>0$ are in $\bar{\mathcal{K}}$. That is, there are losing bids for which natural linear prices alone cannot satisfy envy-freeness, but which are not strong enough to help form a payment-matching scenario:

\begin{definition}
    The set of \emph{non-critical bids} is $\mathcal{K}^{nc}=\{k\in\mathcal{K}|\phi_k>0 \text{ and } \forall \ell \text{  } \bar{\bm{D}}_{k,\ell} = 0\}$.
\end{definition}

It is not essential (and we find not desirable for interpretability) for non-critical bids to strictly adhere to their minimal adjustment amounts $\phi_k$, and so the elaboration procedure does not enforce this. (Notice also that there may be a third set of bids, $\mathcal{K} \setminus (\bar{\mathcal{K}} \cup \mathcal{K}^{nc})$, those bids with $\phi_k=0$. These bids satisfy envy-freeness via natural items alone, and may be referred to as \emph{irrelevant} to the elaboration procedure.)

To form an elaboration of $\phi$ into a set of interpretable cuts, name this set of artificial cuts $\mathcal{J}^+$, with $J^+=|\mathcal{J}^+|$. These newly-added artificial items (cuts) $\alpha^j\in \mathcal{J}^+$ extend $\mathcal{J}$ (the natural items) and are generated to be concatenated to $\bm{A}$ as new rows, with prices $p_j$ for $j\in\mathcal{J}^+$ concatenated to $p$.

\begin{definition}
    A set of vectors $\alpha^j$ for $j\in\mathcal{J}^+$ is an \emph{elaboration} of $\phi$ if and only if $\exists p \in \mathbb{R}^{J^+}_{\geq 0}$ with:
    \begin{align}
    \sum_{j \in \mathcal{J}^+} p_j \alpha^j_k &= \phi_k && \forall k \in \bar{\mathcal{K}} \label{aid1} \\
    \sum_{j \in \mathcal{J}^+} p_j \alpha^j_k &\geq \phi_k && \forall k \in \bar{\mathcal{K}}^{nc} \label{aid1.1}\\
    \alpha^j \bm{D} &\geq \bm{0} &&\forall j \in \mathcal{J}^+  \label{aid2}\\
    \alpha^j &\in \mathbb{Z}^K_{\geq 0} &&\forall j \in \mathcal{J}^+ \label{aid3}
\end{align}
\end{definition}

Here, (\ref{aid1}) define a linear decomposition for critical bids, and (\ref{aid1.1}) relax this decomposition for non-critical bids. The direction of this relaxation ensures envy-freeness is maintained; a non-critical bid will remain envy-free with prices totalling more than $\phi_k$. Critical bids, on the other hand, (those in $\bar{\mathcal{K}}$) match total payments of winning bids or are winning, and so strict equality is required.
Constraints (\ref{aid2}) assure that each element of the elaboration is a valid cut to WDP, and (\ref{aid3}) assures that each artificial cut item consists of non-negative integer amounts in each bundle bid $k$, emulating natural-item supply constraints (\ref{suplfeas}).

\begin{proposition}
    \label{prop:null} For any elaboration, $\alpha^j \bar{\bm{D}} = \bm{0}$ for each $j \in \mathcal{J}^+$ with $p_j>0$.
\end{proposition}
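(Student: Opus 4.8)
The plan is to exploit that an elaboration's price-weighted combination of cut vectors reproduces $\bar{\alpha}$ exactly on the critical bids, while $\bar{\bm{D}}$ is supported entirely on those bids and is annihilated by $\bar{\alpha}$. These two facts combine to show that the same price-weighted combination annihilates every column of $\bar{\bm{D}}$; a nonnegativity argument then forces each positively-weighted term to vanish on its own.

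First I would record two structural facts about $\bar{\bm{D}}$. By construction $\bar{\mathcal{L}}$ collects exactly the columns $\ell$ of $\bm{D}$ that are tight for (\ref{albar1}), so $\bar{\alpha}\bm{D}_{\cdot,\ell}=0$; zeroing the rows with $\bar{\alpha}_k=0$ does not change this product, so $\bar{\alpha}\bar{\bm{D}}=\bm{0}$. Moreover, by the definition of the critical set, every nonzero entry of $\bar{\bm{D}}$ lies in a row $k\in\bar{\mathcal{K}}$, so $\bar{\bm{D}}_{k,\ell}=0$ whenever $k\notin\bar{\mathcal{K}}$. Next I would compute $\left(\sum_{j\in\mathcal{J}^+}p_j\alpha^j\right)\bar{\bm{D}}$ column by column. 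Since $\bar{\bm{D}}_{\cdot,\ell}$ is supported on $\bar{\mathcal{K}}$, only the critical-bid coordinates contribute, and there the decomposition (\ref{aid1}) gives $\sum_j p_j\alpha^j_k=\bar{\alpha}_k$. Hence for every $\ell\in\bar{\mathcal{L}}$ we obtain $\sum_j p_j\,(\alpha^j\bar{\bm{D}}_{\cdot,\ell})=\sum_{k\in\bar{\mathcal{K}}}\bar{\alpha}_k\bar{\bm{D}}_{k,\ell}=(\bar{\alpha}\bar{\bm{D}})_\ell=0$, so the price-weighted sum of the quantities $\alpha^j\bar{\bm{D}}_{\cdot,\ell}$ vanishes.

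The conclusion then follows provided each $\alpha^j\bar{\bm{D}}\geq\bm{0}$: with $p_j\geq 0$, a nonnegative combination of nonnegative numbers that sums to zero must be termwise zero wherever $p_j>0$, which is precisely the claim. Establishing this nonnegativity is the step I expect to be the main obstacle, because $\bar{\bm{D}}$ is a row-zeroed restriction of $\bm{D}$, whereas the validity constraint (\ref{aid2}) only supplies $\alpha^j\bm{D}\geq\bm{0}$ on the full columns. I would close the gap by analyzing the zeroed rows on a tight column $\ell$. The non-critical bids $k\in\mathcal{K}^{nc}$ satisfy $x^\ell_k=x^*_k$ on every tight $\ell$: this is exactly what $\bar{\bm{D}}_{k,\ell}=0$ with $\bar{\alpha}_k>0$ encodes, and is what distinguishes them from critical bids. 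Consequently these rows contribute nothing to either $\alpha^j\bm{D}_{\cdot,\ell}$ or $\alpha^j\bar{\bm{D}}_{\cdot,\ell}$, and the inner product reduces to a sum over $\bar{\mathcal{K}}$. Comparing this reduced sum with the valid inequality $\alpha^j\bm{D}_{\cdot,\ell}\geq 0$, the only remaining discrepancy comes from the irrelevant bids with $\bar{\alpha}_k=0$; the residual technical point is to argue that these either carry no elaboration weight or contribute favorably, so that $\alpha^j\bar{\bm{D}}_{\cdot,\ell}\geq 0$ is preserved and the termwise argument completes the proof.
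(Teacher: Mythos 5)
Your proof follows essentially the same route as the paper's: the paper argues by contradiction, but the substance is identical---right-multiply (\ref{aid1}) by a tight column to obtain $\sum_{j\in\mathcal{J}^+}p_j\,\alpha^j\bar{\bm{D}}^\ell=\bar{\alpha}\bar{\bm{D}}^\ell=0$ (using that the nonzero rows of $\bar{\bm{D}}$ lie in $\bar{\mathcal{K}}$), then conclude termwise from $p_j\geq 0$ and $\alpha^j\bar{\bm{D}}^\ell\geq 0$. Where you differ is in your candor about the nonnegativity step: the paper dismisses $\alpha^j\bar{\bm{D}}^\ell<0$ as ``impossible by (\ref{aid2}),'' even though (\ref{aid2}) controls $\alpha^j\bm{D}$ and not the row-zeroed matrix $\bar{\bm{D}}$. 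Your treatment of the non-critical rows is correct and closes most of the distance: for $k\in\mathcal{K}^{nc}$ and tight $\ell$ one indeed has $x^\ell_k=x^*_k$, so those rows contribute nothing to either $\alpha^j\bm{D}^\ell$ or $\alpha^j\bar{\bm{D}}^\ell$. For the rows with $\bar{\alpha}_k=0$ that you leave as a ``residual technical point,'' the two subcases behave asymmetrically: a losing such bid appearing in $\ell$ has $\bm{D}_{k,\ell}=-1$, so zeroing that row only \emph{increases} the inner product; the sole way to lose nonnegativity is a \emph{winning} bid with $\bar{\alpha}_k=0$ that nonetheless receives $\alpha^j_k>0$ for some $j$ with $p_j>0$, i.e., an elaboration that quotes a positive artificial price to a winner whose minimal adjustment is zero. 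Conditions (\ref{aid1})--(\ref{aid3}) do not literally forbid this (they constrain only $\bar{\mathcal{K}}\cup\mathcal{K}^{nc}$), so this loose end is present in the paper's own proof as much as in yours; it is harmless for the elaborations the paper actually constructs, but as a statement about ``any elaboration'' the proposition implicitly relies on that extra observation. In short: same argument, and the obstacle you flag is precisely the step the paper waves through.
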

\begin{proof}
 Suppose not: there is some column $\bar{\bm{D}}^\ell$ with $\alpha^j\bar{\bm{D}}^\ell>0$ for some $j$ with $p_j>0$. (Notice, $\alpha^j\bar{\bm{D}}^\ell<0$ is impossible by (\ref{aid2})) Next, right-multiply (\ref{aid1}) by $\bar{\bm{D}}$, resulting in $\sum_{j \in \mathcal{J}^+} p_j \alpha^j \bar{\bm{D}}^\ell = 0$ for scenario $\ell$. This is a contradiction because the LHS of this equation has at least one positive term (by supposition) and no negative terms (by (\ref{aid2}) and $p\geq \bm{0}$). 
\end{proof}

\medskip Proposition \ref{prop:null} shows that any cut $\alpha^j$ is necessarily in the kernel (aka null space) of $\bar{\bm{D}}^T$. Intuitively, for any payment-setting scenario to remain payment-setting, changes to the winning bids must match changes to the associated losing bids in each tight scenario. The converse (that a member of the kernel could always be part of some elaboration) is not true in general; first, $\alpha^j_k$ variables with $k\in\mathcal{K}^{nc}$ intersect only zero entries in $\bar{\bm{D}}$ and so a given kernel basis is orthogonal to (\ref{aid1.1}) constraints; and secondly, constraints (\ref{aid2}) may not be satisfied by some elements in the kernel basis. Thus, while any Gaussian elimination of $\bar{\bm{D}}^T$ produces a basis for the kernel, members 
of an arbitrary basis may fail to satisfy (\ref{aid1.1})-(\ref{aid3}) as desired.

We therefore introduce an \emph{elaboration algorithm} with two steps: (i) generate a unique set of basis vectors of $ker(\bar{\bm{D}}^T)$ that satisfy (\ref{aid1}), (\ref{aid2}), and (\ref{aid3}); and then (ii) use an additional optimization to generate a (perhaps non-unique) lifting of the resulting kernel-basis cuts in order to satisfy the non-critical (\ref{aid1.1}).
To accomplish (i) one can iteratively solve a sequence of MIPs and LPs, which we now describe. We find a first kernel-basis cut by solving:
\begin{align}
    \min\quad & \alpha^1 \bm{1} \label{mp:fa1} \tag{Find-$\alpha^{1}$}\\
    &\alpha^{1} \bar{\bm{D}} = \bm{0} \label{fa1.1}\\
    &\alpha^{1} \bm{D} \geq \bm{0} \label{fa1.2}\\
    &\alpha^1 \bm{1} \geq 1 \label{fa1.3}\\
    &\alpha^{1} \in \mathbb{Z}^K_{\geq 0} \label{fa1.4}
\end{align}
Minimizing the sum of the integer $\alpha^1$ terms reflects our preference for economic interpretability. Constraints assure that the cut is a non-zero (\ref{fa1.3}) integer vector (\ref{fa1.4}) in the kernel (\ref{fa1.1}) and also a valid cut (\ref{fa1.2}). 

In order to make progress with each basis cut added, we evaluate how much of $\phi$ can be covered by bases generated thus far by solving an LP, starting at $t=1$:
\begin{align}
    \min\quad & r^t \bm{1} \label{mp:PC} \tag{PriceCuts$^t$}\\
    \sum_{j=1}^{t} p^t_j \alpha^j + r^t &= \phi \label{PC.1} \\
    p^t &\in \mathbb{R}^t_{\geq 0} \label{PC.2}\\
    r^t &\in \mathbb{R}^K_{\geq 0} \label{PC.3}
\end{align}
Here $r$ is a vector of remainders, non-negative amounts of $\phi$ that cannot be covered by price-multiples of the $t$ cut vectors generated thus far. When at some iteration $t$ we arrive at zero objective (i.e., $r=\bm{0}$), part (i) of elaboration can terminate, having successfully found  basis vectors of $ker(\bar{\bm{D}}^T)$ that satisfy (\ref{aid1}), (\ref{aid2}), and (\ref{aid3}). 
For example, if $nullity(\bar{\bm{D}}^T)=1$, as the lone basis vector of $ker(\bar{\bm{D}}^T)$, $\alpha^1$ would generate any kernel element as a multiple, and thus the optimal solution to \ref{mp:PC} is zero at the first iteration. If not, it is not time to terminate, and so we increment $t$ and generate the next cut vector by solving the following MIP (\ref{mp:fat}) for $t\geq 2$:
\begin{align}
    \min\quad & \alpha^t \bm{1} \label{mp:fat} \tag{Find-$\alpha^{t}$}\\
    &\alpha^{t} + \sum_{j=1}^{t-1} \gamma_j \alpha^j = \delta^t \label{fat1} \\
    &\alpha^{t} \bar{\bm{D}} = \bm{0} \label{fat2}\\
    &\alpha^{t} \bm{D} \geq \bm{0} \label{fat3}\\
    &\delta_k \geq 0 && \quad\quad\forall k\in\bar{\mathcal{K}} \text{ }| \text{ }r^{t-1}_k \geq 0\label{fat4}\\
    &\delta_{\Tilde{k}} \geq \epsilon &&\text{for some }\Tilde{k}\in\bar{\mathcal{K}} \text{ with }r^{t-1}_k > 0\label{fat5}\\
    &\delta_k = 0 && \quad\quad\forall k\in\bar{\mathcal{K}}\text{ }| \text{ }r^{t-1}_k = 0\label{fat6}\\
    &\alpha^{t} \in \mathbb{Z}^K_{\geq 0} \label{fat7}\\
    & \gamma \in \mathbb{R}^{t-1}\label{fat8}\\
    & \delta^t \in \mathbb{R}^K_{\geq 0} \label{fat9}
\end{align}
As in \ref{mp:fa1}, each $\alpha^t$ will be an integer vector in the kernel and also a valid cut. Constraints (\ref{fat4}) and (\ref{fat5}) assure that non-negative progress can be made in reducing positive $r^{t-1}$ terms, with strict progress for an arbitrarily chosen $\Tilde{k}$, while not disrupting $r^{t-1}$ terms that have already reached zero. This process of solving \ref{mp:fat} then \ref{mp:PC} continues until $r^t=\bm{0}$, at most $t=nullity(\bar{\bm{D}}^T)$ iterations.
\begin{proposition}
    After $J^+=nullity(\bar{\bm{D}}^T)$ iterations, (\ref{mp:fat}, \ref{mp:PC}) generates artificial-item cuts that satisfy elaboration conditions (\ref{aid1}), (\ref{aid2}), and (\ref{aid3}), and thus form an undominated integer basis of $ker(\bar{\bm{D}}^T)$.
\end{proposition}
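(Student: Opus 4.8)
The plan is to verify the three elaboration conditions and the basis claim largely by reading them off the constraints of the two subproblems, reserving the real work for a feasibility-and-progress lemma. First I would record two structural facts that drive everything. Every produced cut $\alpha^t$ is a nonnegative integer vector by \eqref{fa1.4}, \eqref{fat7}, satisfies $\alpha^t\bm{D}\geq\bm{0}$ by \eqref{fa1.2}, \eqref{fat3}, and satisfies $\alpha^t\bar{\bm{D}}=\bm{0}$ by \eqref{fa1.1}, \eqref{fat2}; hence \eqref{aid2} and \eqref{aid3} hold by construction and each $\alpha^t\in ker(\bar{\bm{D}}^T)$. Second, because $\bar{\alpha}\in ker(\bar{\bm{D}}^T)$ and each $\alpha^j\in ker(\bar{\bm{D}}^T)$, every remainder $r^t=\bar{\alpha}-\sum_{j\leq t}p_j\alpha^j$ produced by \eqref{mp:PC} again lies in $ker(\bar{\bm{D}}^T)$; this invariant is what lets me reason about progress one critical coordinate at a time.

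Next I would establish that the iteration never stalls while the remainder is nonzero. Feasibility of \eqref{mp:fa1} follows by clearing denominators of the rational vector $\bar{\alpha}$, which is already a nonnegative, valid, kernel vector, to obtain an integer feasible point. For $t\geq 2$ with $r^{t-1}\neq\bm{0}$ on the critical bids, I would exhibit a feasible triple $(\alpha^t,\gamma,\delta^t)$: since $r^{t-1}$ is a nonnegative kernel vector supported only on not-yet-covered critical coordinates, a suitably scaled valid integer cut combined with real multipliers $\gamma_j$ of the earlier cuts produces a $\delta^t$ that is nonnegative on critical bids, zero on the already-covered coordinates \eqref{fat6}, and strictly positive on some uncovered $\tilde{k}$ per \eqref{fat4}, \eqref{fat5}. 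This existence step is where I expect the main obstacle: one must show that a \emph{valid} integer kernel cut can always be found making strictly positive progress on a chosen uncovered coordinate without disturbing the covered ones. I would argue this from the polyhedral geometry of the payment-matching scenarios encoded by $\bar{\bm{D}}$ — the remainder lives in the same cone that $\bar{\alpha}$ certifies as nonempty, so a progress-making valid direction is available — and then round to an integer representative.

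With feasibility in hand I would prove linear independence and the dimension count. Constraints \eqref{fat5}-\eqref{fat6} force $\delta^t$, and therefore the new cut relative to its predecessors, to carry positive weight on a coordinate whose remainder was strictly positive; consequently $\alpha^t$ cannot lie in the linear span of $\alpha^1,\dots,\alpha^{t-1}$, so the generated family is linearly independent. Since they all sit in $ker(\bar{\bm{D}}^T)$, at most $nullity(\bar{\bm{D}}^T)$ can ever be produced, while the progress argument drives the positive remainder coordinates to zero one at a time, forcing $r^t=\bm{0}$ after exactly $J^+=nullity(\bar{\bm{D}}^T)$ steps; the independent kernel family is then a basis. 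At termination \eqref{PC.1} reads $\sum_{j}p_j\alpha^j=\bar{\alpha}$, and restricting to $\bar{\mathcal{K}}$ yields the exact decomposition \eqref{aid1}, with the non-critical inequalities \eqref{aid1.1} deferred to the subsequent lifting step.

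Finally, undominatedness comes from the per-iteration objective $\min\ \alpha^t\bm{1}$: each cut is a coordinate-sum-minimal nonnegative integer vector in the valid kernel cone consistent with the imposed support pattern, so no produced cut is dominated by a nonnegative combination of the others. This matches the Hilbert-basis interpretation noted around Example~\ref{ex:Six}, and I would either invoke the minimality of each mixed-integer solution directly or appeal to that Hilbert-basis structure to conclude the generators form an \emph{undominated} integer basis. I expect the routine parts — reading off \eqref{aid2}, \eqref{aid3}, kernel membership, and the termination bookkeeping — to be short, with essentially all the difficulty concentrated in the feasibility/progress lemma of the second paragraph.
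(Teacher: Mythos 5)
Your plan follows the same architecture as the paper's proof: conditions (\ref{aid2}) and (\ref{aid3}) and kernel membership are read directly off the constraints of \ref{mp:fa1} and \ref{mp:fat}; the invariant that $r^t=\bar{\alpha}-\sum_{j\le t}p_j\alpha^j$ stays in $ker(\bar{\bm{D}}^T)$ is consistent with the paper; feasibility of \ref{mp:fa1} by scaling the rational vector $\bar{\alpha}$ (which is nonnegative, valid, and tight on $\bar{\bm{D}}$ by construction) is exactly the paper's observation; and you correctly locate the real work in feasibility/progress of the later iterations. However, two steps do not go through as written. First, for $t\ge 2$ you propose to find a real progress-making direction and then ``round to an integer representative.'' Rounding does not preserve the equality $\alpha\bar{\bm{D}}=\bm{0}$ nor the validity condition $\alpha\bm{D}\ge\bm{0}$, so this move fails. (Clearing denominators would work, since the constraints of \ref{mp:fat} are homogeneous in $(\alpha^t,\gamma,\delta^t)$ and scaling only strengthens $\delta_{\tilde{k}}\ge\epsilon$, but you still owe an argument that a rational feasible point exists.) The paper's resolution is to observe that $\{\alpha\in\mathbb{Z}^K_{\ge 0}\mid\alpha\bar{\bm{D}}=\bm{0},\ \alpha\bm{D}\ge\bm{0}\}$ is the set of lattice points of a pointed rational polyhedral cone and hence is finitely generated by a unique Hilbert basis; this guarantees integer solutions to every \ref{mp:fat}, underwrites the claim that $\bar{\alpha}$ is a nonnegative price-weighted combination of the generated cuts (so that \ref{mp:PC} terminates with $r=\bm{0}$ and (\ref{aid1}) holds), and is also the source of the ``undominated'' conclusion. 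You invoke the Hilbert basis only at the very end for undominatedness; the paper needs it much earlier.

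Second, your linear-independence argument is a non sequitur. From ``$\delta^t$ carries positive weight on an uncovered coordinate'' it does not follow that $\alpha^t\notin\mathrm{span}(\alpha^1,\dots,\alpha^{t-1})$: since $\gamma$ is a free decision vector, $\delta^t=\alpha^t+\sum_j\gamma_j\alpha^j$ lies in that span exactly when $\alpha^t$ does, and a span element can perfectly well be nonnegative, zero on the covered coordinates, and at least $\epsilon$ on $\tilde{k}$. The paper instead routes independence through the optimality of \ref{mp:PC} at iteration $t-1$: multiplying (\ref{fat1}) by a small $p_t>0$ and adding it to (\ref{PC.1}) exhibits a strictly better \ref{mp:PC} solution, which would contradict optimality at $t-1$ if $\alpha^t$ were a combination of the earlier cuts; hence the basis must grow, the iteration count is bounded by $nullity(\bar{\bm{D}}^T)$, and termination follows. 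You need that detour through \ref{mp:PC}; the sign pattern of $\delta^t$ alone does not deliver independence. (A minor further point: the process terminates in \emph{at most} $nullity(\bar{\bm{D}}^T)$ iterations, and each iteration guarantees a strict decrease of $r\bm{1}$, not that a remainder coordinate is zeroed ``one at a time.'')
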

\begin{proof}
    First, \ref{mp:fat} is feasible for $t\in\{1,...J^+\}=\mathcal{J}^+$. The feasible set $\{\alpha \in \mathbb{Z}^K_{\geq 0}| \alpha \bar{\bm{D}}=\bm{0}, \alpha \bm{D} \geq \bm{0}\}$ consists of the integer lattice points of a pointed polyhedral cone, and thus is finitely generated by a unique Hilbert basis \citep[see][]{AARDAL20025}. This basis is a collection of component-wise minimal row vectors, together forming a matrix $H$ generating the same feasible set as 
    $\{\alpha | \alpha = p H, p \in \mathbb{Z}^{|H|}_{\geq0}\}$, which can also be characterized as a Graver basis \citep[see][]{le2024equivariant}. The theoretical existence of this Hilbert basis guarantees integer solutions to each \ref{mp:fat}. Because the Hilbert basis definition demands \emph{integer} weights, $|H|>nullity(\bar{\bm{D}})$ is possible. Here, these weights are prices and thus continuous, and so a standard linear basis of size $nullity(\bar{\bm{D}})$ is appropriate. As a consequence, the generated $\alpha^t$ cuts are potentially a strict subset of $H$: generators of the standard linear cone generated by $H$. Note that there is no loss in restricting from the integer cone $\{\alpha \in \mathbb{Z}^K_{\geq 0}| \alpha \bar{\bm{D}}=\bm{0}, \alpha \bm{D} \geq \bm{0}\}$ to the linear cone of $H$, as $\phi$ belongs to the latter, as it is rational given rational coefficients to the original problem, and therefore an integer multiple belongs to the former. 

    To see that progress in generating bases is made at each $t$, multiply (\ref{fat1}) by an arbitrary $p_t>0$ and add it to the $t-1$ indexed (\ref{PC.1}), yielding:
    \[
    p_t\alpha^{t} + \sum_{j=1}^{t-1} (p_t\gamma_j+p^{t-1}_j) \alpha^j + r^{t-1} -p_t\delta^t = \phi\text{ .}
    \]
    This new $\alpha^t$ extends the basis to improve the objective of \ref{mp:PC}, because $r^{t-1} -p_t\delta^t \leq r^{t-1}$ by construction, with strict inequality for the chosen $\Tilde{k}$. This improvement would not be possible if $\alpha^t$ was a linear combination of previous $\alpha^j$s (or else optimality at $t-1$ is contradicted), thus the basis must grow. Note that this only shows that the new basis establishes an improving direction when $p_t$ is small enough that $p_t\gamma_j+p^{t-1}_j>0$ for $j<t$, and so we still solve \ref{mp:PC} in between each \ref{mp:fat} iteration. 
\end{proof}

\medskip The solution set of the series \ref{mp:fat} can be solved by an IP solver as written, or be enumerated as in Algorithm \ref{alg:enum}, presented in Appendix \ref{Findfat}, which uses a depth-first backtracking search tree. That intuitive characterization can be implemented ``by hand'' on small instances. 

For elaboration part (ii), we need $\alpha^j_k$ for $(j,k) \in \mathcal{J}^+\times \mathcal{K}^{nc}$ that together satisfy constraints (\ref{aid1.1}). Strengthening the already generated (part (i) elaborated) cuts to include \emph{non-critical} bids is accomplished by solving the following IP, a kind of \emph{group lifting} of these variables:
\begin{align}
    \min\quad  \sum_{j\in\mathcal{J}^+} &\sum_{k\in \mathcal{K}^{nc}}\alpha^j_k \label{mp:ncl} \tag{$nc$-lift}\\  
    \sum_{j \in \mathcal{J}^+}& p_j \alpha^j_k \geq \phi_k && \forall k \in \mathcal{K}^{nc} \label{ncl1}\\
    \sum_{k\in \mathcal{K}^{nc}}& x^{\ell}_k \alpha^j_k \leq \sum_{k\in \mathcal{K}^* } x^{\ell}_k \alpha^j_k - \sum_{k\in \mathcal{K}} x^{\ell}_k \alpha^j_k\label{ncl2} && \forall (j,\ell)\in\mathcal{J}^+\times \mathcal{L}^{nc}\\
    & \alpha^j_k \in \mathbb{Z}_{\geq 0} && \forall (j,k) \in \mathcal{J}^+ \times \mathcal{K}^{nc} \nonumber
\end{align}
Notice that only $\alpha^j_k$ for $(j,k) \in \mathcal{J}^+ \times \mathcal{K}^{nc}$ are decision variables in this formulation, with all $p_j$, $x$, and other $\alpha^j_k$ terms fixed from part (i) of the elaboration. Constraints (\ref{ncl1}) repeat (\ref{aid1.1}), while (\ref{ncl2}) ensure that each artificial item $j \in \mathcal{J}^+$ remains a valid cut for WDP after lifting in $nc$ variables (equivalent to $\alpha \bm{D}\geq \bm{0}$ but focusing the decision variables to the LHS of this formulation and using only $\mathcal{L}^{nc}$ columns of $\bm{D}$).

To see that this this IP must be feasible, observe that we could selectively increase both bid amount $b^k$ and $\phi_k$ amounts by the same level for some bids $k\in\mathcal{K}^{nc}$ to result in a new auction in which all such $k$ are now critical. In that new auction, elaboration part (i) would give us cuts including $\mathcal{K}^{nc}$, and thus a feasible solution to \ref{mp:ncl}. Still, such an adjusted auction approach would tend to overstate the prices quoted to $nc$ bids and distort the interpretability of artificial items. We thus pursue the current formulation to find more informative prices.

For the concise examples presented here, we tend not to include extraneous bids that would end up being non-critical. An exception is Example \ref{ex:Three}: bid $b_6$ is non-critical because it offers only 34 for all items, when the winning and price setting outcomes are each able to produce at least 35. The elaboration procedure (i) would first generate the cut $x_1+x_2+x_3+x_4+x_5 \leq 2$ among winning and critical bids, and then (ii) would solve \ref{mp:ncl} to lift in $b_6$ with a coefficient of 2: relative to any other set-maximal allocation, $b_6$ knocks out both bids that would pay for one copy of the artificial item. Similarly, if one were to (say) introduce an additional bid $b_6(5A)=40$ into Example \ref{ex:Five}, this new bid would be non-critical and lifted in to have coefficients of 1 and 1 for the generated artificial items. This $nc$ bid would face a payment of 57.5, which is strictly non-envied and symmetric to the winning bids, which bid on the same set.

As one attempts larger and thus computationally harder auction instances, the treatment of $\mathcal{K}^{nc}$ in elaboration part (ii) could be relaxed or eliminated altogether without abandoning the main results presented. A few choices for elaboration part (ii) are immediately possible:
\begin{itemize}
    \item \textbf{Minimal:} Implement elaboration exactly as stated here. For a selected final outcome based on a particular $\phi$. Non-critical bids are guaranteed to be among the AWE that minimize the total artificial price quotes to $nc$ bids.
    \item \textbf{Integer Feasible:} Implement elaboration as stated here, but stopping at any feasible solution to \ref{mp:ncl}, perhaps when a computational time-limit or optimality-limit is reached. The $nc$ bids see valid linear price quotes that leave them envy-free, but these may not be minimal among such prices, a slightly worse but largely agreeable outcome.
    \item $nc$\textbf{-Nonlinear Only:} Implement elaboration without running part (ii). Final prices are totally explained by the relationship to critical bids, including those that form the price-match allocations. Critical bids see the linear decomposition of $\phi_k=\sum_{j \in \mathcal{J}^+} p_j \alpha^j_k$, as always, while $nc$ bids would only see a lump-sum non-linear payment component $\phi_k$ (above linear prices on original items). This lump-sum would cover envy-freeness, and the optimality of \ref{mp:albar} would still guarantee that the $nc$ bid could not be part of any allocation that would match total payments of the winners.
\end{itemize}

\section{A Search Heuristic for Enumerating Solutions to \texorpdfstring{\ref{mp:fat}}{Find-alpha\^t}} \label{Findfat}
Starting with a single starting node, when reaching the Select function, add tree branches from the current node to one new node for each feasible selection $k$, and proceed forward to a node by selecting an arbitrary $k$. When a Backtrack command is encountered, run the code backward to the most recent Select command. If there is an unexplored $k$ at that selection, take one such arbitrary $k$ instead of one previously used, and run the code forward with this alternate selection. If there is no unexplored node (because all have been explored on a previous walk forward) then continue backward through the while loop, subtracting previous $e_k$ updates to $\alpha$, until encountering a Select command with unexplored successors. Proceed forward to explore.

 \begin{algorithm}
    \caption{Enumerate feasible solutions to \ref{mp:fat} for $t=1$ to $t=nullity(\bar{\bm{D}}^T)$.}
    \label{alg:enum} 
    \begin{algorithmic}[0]
    \State Initialize: $keep=\emptyset$; $out=\emptyset$; $\alpha=\bm{0}$
    \State Select an arbitrary $k \in \bar{\mathcal{K}} \cap \mathcal{K}^*$, and set $\alpha_k=1$.
    
    \State \textbf{while} $\alpha \bar{\bm{D}} \neq \bm{0}$ \textbf{do}
        \State \qquad Update $\mathcal{K}_{near}=\text{argmin}_{k\in \bar{\mathcal{K}}|(\alpha+e_k)\notin keep\cup out} \text{ }\lVert  (\alpha + e_k)\bar{\bm{D}} \rVert_1$
        \State \qquad Update $\mathcal{K}_{best}=\text{argmin}_{k\in \mathcal{K}_{near}} \text{ } \alpha_k$
        \State \qquad Select an arbitrary $k\in \mathcal{K}_{best}$ and set $\alpha\gets\alpha +e_k$

    \State \textbf{if} $\alpha \bm{D}\geq\bm{0}$ \textbf{then}
                \State \qquad Add $\alpha$ to $keep$
    \State \textbf{else}
                \State \qquad Add $\alpha$ to $out$

    \State \textbf{if} $|keep|=nullity$ \textbf{then}
                \State \qquad Terminate
    \State \textbf{else} 
                \State \qquad Backtrack

\end{algorithmic} 
\end{algorithm}
This algorithm is easy to implement ``by hand'' for small problems and adds intuition for the ease of generating kernel-basis cuts.

\end{APPENDICES}

\end{document}